\title{Knapsack Voting for Participatory Budgeting\thanks{This paper appeared in the ACM Transactions on Economics and Computation, July 2019 (\url{https://doi.org/10.1145/3340230).}}}
\author{
    Ashish Goel\\
	Stanford University\\
	\texttt{ashishg@stanford.edu} \\
	\And
	Anilesh K. Krishnaswamy\\
	Stanford University\\
	\texttt{anilesh@stanford.edu}\\
	\And
	Sukolsak Sakshuwong\\
	Stanford University\\
	\texttt{sukolsak@stanford.edu}\\
	\And
	Tanja Aitamurto\\
	Stanford University\\
	\texttt{tanjaa@stanford.edu}
}
\date{}
\begin{document}
\maketitle

\begin{abstract}
We address the question of aggregating the preferences of voters in the context of participatory budgeting. We scrutinize the voting method currently used in practice, underline its drawbacks, and introduce a novel scheme tailored to this setting, which we call ``Knapsack Voting''. We study its strategic properties - we show that it is strategy-proof under a natural model of utility (a dis-utility given by the $\ell_1$ distance between the outcome and the true preference of the voter), and ``partially" strategy-proof under general additive utilities. We extend Knapsack Voting to more general settings with revenues, deficits or surpluses, and prove a similar strategy-proofness result. To further demonstrate the applicability of our scheme, we discuss its implementation on the digital voting platform that we have deployed in partnership with the local government bodies in many cities across the nation. From voting data thus collected, we present empirical evidence that Knapsack Voting works well in practice.
\end{abstract}

\keywords{Participatory Budgeting \and Social Choice \and Digital Voting}

\section{Introduction}\label{sec:intro}
Direct democracy has gained a lot of significance lately, with many novel attempts at engaging citizens very directly in policy-making \citep{pateman2012participatory,smith2009democratic}. An exciting new development in this space is participatory budgeting \citep{cabannes2004participatory}, in which a local government body asks residents to vote on project proposals to decide how they should allocate their budgetary spending. The proposals could be, in a particular city/ward for instance, resurfacing streets, adding street lights, building playgrounds for children or renovating recreational facilities like parks.
	Participatory budgeting has had a long history in South America \citep{schneider2002budgets}, where it was born out of a need to address inequalities through democratic reform. It is now gaining popularity in the US, with cities like San Francisco, Vallejo, Boston, Chicago and New York adopting this paradigm \citep{pbp}. This development necessitates a detailed look at the voting method used in current ballots, and motivates the following question: when there are projects with different costs, and a fixed budget,  how can the varied preferences of voters be best aggregated? We address this question by proposing Knapsack Voting, and discuss its advantages over existing methods. We show that it has desirable strategic properties, and provide empirical evidence that it works well in practice. We have been successful in implementing Knapsack Voting as the official ballot procedure in the Youth Lead the Change PB election, 2016, in Boston (\url{https://youth.boston.gov/youth-lead-the-change/}). Independently of our work, other implementations of the same method have been used in some elections in Europe (see Section \ref{subsec:otherindep} for a discussion) -- adding to the relevance of our work in studying the advantages of Knapsack Voting.
	Overall, this piece of work is a step in the direction of addressing the need for mechanisms to facilitate complex decision-making processes. In designing the Knapsack Voting method, we incorporate ideas from the classical Knapsack Problem to make the voter choose projects under the budget constraint. In doing so, we align the constraints on the voters' decisions, to those of the decision maker's. We believe that this approach of aligning incentives is useful more broadly in designing mechanisms for complex participatory decision-making processes. Moreover, we want our schemes to be amenable to implementation as convenient digital interfaces to help individuals make informed budgeting decisions.

\subsection{Participatory budgeting in practice} \label{sec:kapproval}
PB first began in Porto Alegre in Brazil in 1988, and since then has become increasingly popular in Brazil\footnote{\url{https://siteresources.worldbank.org/INTEMPOWERMENT/Resources/14657_Partic-Budg-Brazil-web.pdf}}. Over the past couple of decades, PB has spread to Europe and North America, and more recently taken root in Asia and Africa \citep{hopefordem}. As of 2015, there have been nearly 1500 instances of PB around the world \citep{ganuza2012power}.

The first PB process in the United States was initiated in the 49th ward of Chicago in 2009. Since then PB has spread to many other cities like Vallejo, New York, Cambridge, to name a few. These elections traditionally used only paper ballots, and a voting method known as K-approval that we will describe shortly. 

\subsubsection{Our digital voting platform}
To popularize digital voting, we built a digital voting platform (\url{https://pbstanford.org/}) for the PB process of Chicago's 49th ward in 2012. We have since customized this tool to be easily adaptable to other elections. Many of the cities/municipalities doing PB have taken to digital voting by adopting our platform, while some cities like Long Beach and Dieppe have used our platform for their very first attempt at PB. In all, our platform has been used in over 25 different elections in around a dozen cities/wards over the past few years. One thing to note is that digital voting here means voting in person at a polling booth but via the digital platform. Most of these elections are required to have a paper ballot for those voters that opt out of digital voting. Some elections, for example PB Cambridge, implemented internet voting, where voters can authenticate themselves and vote from anywhere, via a voting website designed using our platform. These elections had accompanying paper ballots also.

\begin{figure}
\centering
  \includegraphics[width=240pt]{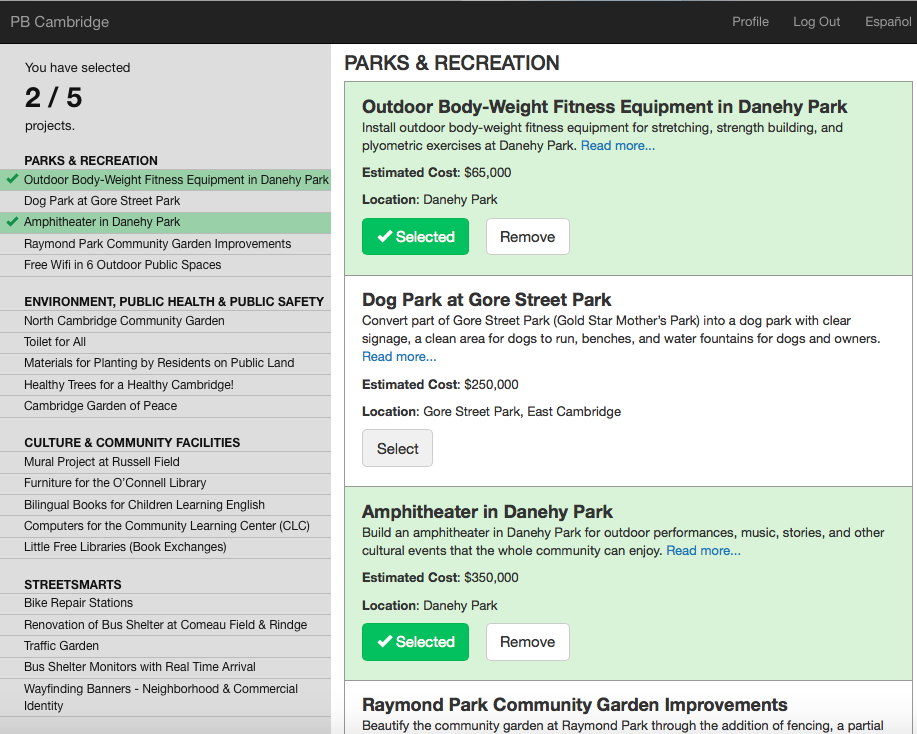}
    \caption{Our interface for K-approval: example from PB Cambridge 2015}
  \label{fig:kapproval}
\end{figure}

\subsubsection{K-approval voting}
The voting method currently used by most participatory budgeting elections is  \emph{K-approval} voting. Our interface for this method is shown in Figure \ref{fig:kapproval}. This is similar to the standard approval voting method, but with a cap on the number of alternatives a voter can choose.
\begin{definition}[K-approval voting]
\begin{itemize}
 \item Each voter chooses (``approves") at most $K$ projects. 
 \item The projects are ranked in descending order according to the total number of approving voters.
 \item The outcome is decided by picking projects in this order until the budget is exhausted.
 \end{itemize}
  \end{definition}
  
   While this method is simple and has wide acceptance, it doesn't make the voters factor in the costs of the projects into their votes. Consider the following example:
\begin{example}\label{ex:kapproval}
Think of one voter (or a homogeneous pool of voters) doing $1$-approval. Let's say the budget is \$300 and there are three projects on the ballot: Project $A$, Project $B$, and Project $C$. Their costs are \$ $300$, \$ $200$, \$ $100$, respectively. Imagine that Project $A$ is the single most beneficial project among all, but Project $B$ and Project $C$ are together more valuable in terms of benefit than just Project $A$. 

\begin{table}[ht]
	\centering
	\caption{An example, 3 projects and additive utilities.}
    \label{tab:kapprovalexample}
    \begin{tabular}{ | r | r | r | r |}
    \hline
      & Cost \\ \hline
    Project $A$ &  \$$300$ \\ \hline
    Project $B$ & \$$200$ \\ \hline
    Project $C$ & \$$100$\\ \hline
    \end{tabular}
\end{table} 
It is natural to expect that the voter(s) will vote for Project $A$ under $1$-approval. Although Projects B and C are together feasible under the budget of \$ $300$, there is no way for the voter to express these preferences.
Although this is a very simple example which makes a strong assumption on the voters' response to the K-approval rule, it is an apt illustration of observed voting behavior (see Section \ref{subsec:knapsackdata}). 

A point to note here is that even truthful reporting under K-approval voting (choosing projects with highest utility) does not lead to good outcomes as seen in Example \ref{ex:kapproval} above. It is conceivable that the outcome might be different if the voters vote strategically, taking aggregation with the budget constraint into account. However, we would like a natural way for the voters to express their preferences with respect to costs and benefits -- one that works at least for a homogeneous population under truthful reporting.

This leads us to the following questions:
\begin{itemize}
\item What is a good way of eliciting voter preferences?
\item What utility models can we use to uncover some insights on the above problem?
\end{itemize}
\end{example}

We utilize ideas derived from ways of solving the Knapsack Problem: corresponding to our Knapsack Voting and Value-for-money schemes, which will be treated in detail in later sections of this paper Sections \ref{sec:knapsack} and \ref{sec:vfmcomps}. The remainder of this section is organized as follows: we introduce some modeling preliminaries in Section \ref{subsec:votutilmodels} and give a short overview of our work in the Section \ref{subsec:overview}. We discuss the implications and properties of our model in Section \ref{subsec:discutil}

\subsubsection{Other implementations similar to Knapsack Voting}\label{subsec:otherindep}
We would like to mention here that independent of our work on Knapsack Voting, other implementations of the same voting method have been in use in some places in Europe -- this adds to the relevance of our work in studying the advantages of Knapsack Voting. The most prominent (with a total allocated budget, across many districts, of roughly \EUR{40} million) among these are the ``shopping cart votes" use in the PB process in Madrid\footnote{\url{https://decide.madrid.es/presupuestos-participativos-resultados}}, Spain, where it has been in use since 2016. The origins of this method come from the ``Open Active Voting" method in the My Neighbourhoods Project in Reykjavik\footnote{\url{https://citizens.is/portfolio_page/my-neighbourhood/}}, Iceland, where it has been in use since 2012. The same platform has been used in other places like K\`opavogur and Mos\`o in Iceland\footnote{\url{https://citizens.is/portfolio/}}, and a pilot implementation in Argyll and Bute\footnote{\url{https://www.argyll-bute.gov.uk/abpb/} ; \url{https://citizens.is/portfolio/}}, Scotland.

\subsection{Modeling assumptions: voting and utility models}\label{subsec:votutilmodels}
The participatory budgeting problem addresses the following scenario: the residents of a city, collectively the set of voters $\V$, vote on a set $\p$ of projects that they have identified to be worthwhile, where project $j \in \p$ has a cost $c_j$ and there is a fixed total budget of $B$ Dollars.

\subsubsection{Voting models}
We will mainly consider a fractional model of voting -- each voter can allocate the budget $B$ in any which way among the projects. Our results for this model also carry over (approximately) to an integral setting. For ease of exposition, we defer a discussion of the above to Section \ref{app:integral} in the Appendix.

\begin{definition}[Fractional Vote]\label{def:fracvote}
Each voter $v$ chooses an allocation $\{w^v_p\}_{p \in \p}$ such that $0 \leq w^v_p \leq c_p$ for all $p \in \p$, and $\sum_{p \in \p} w^v_p = B$.
\end{definition}

In effect, we make the following assumption:
\begin{assumption}\label{ass:lastproject}
Every project is allowed to be fractionally implemented, and the budget is completely used up.
 \end{assumption}
 
\emph{Note:} It may or may not be realistic to partially fund projects (e.g., renovate only one floor of a library rather than the entire building), but partial funding and the requirement of spending all of the budget are both common in practice.

\subsubsection{Utility models}
In what follows, we give concise definitions of the utility models we use in this paper. A full discussion of their properties is deferred to Section \ref{subsec:discutil}.

Let $\{w_p\}_{p \in \p}$ denote an outcome, and $\{w^v_p\}_{p \in \p}$ denote voter $v$'s ideal allocation. We first define the $\ell_1$ costs model.
\begin{definition}[$\ell_1$ costs]\label{def:l1model}
The $\ell_1$ cost of an outcome $\{w_p\}_{p \in \p}$ (where $\sum_{p \in \p} w_p = B$: Assumption \ref{ass:lastproject}) for a voter $v$ is given by $\sum_p |w_p - w_p^v|$.
\end{definition}

We define another model called ``Overlap utility".
\begin{definition}[Overlap utility]\label{def:overlap1}
The Overlap utility of an outcome $\{w_p\}_{p \in \p}$ (where $\sum_{p \in \p} w_p = B$) for a voter $v$ is given by $\sum_{p \in \p}\min \{w_p,w_p^v\}$.
\end{definition}
The above model will be especially useful in our proofs in Section \ref{sec:knapsack}. The term corresponding to $p$ in the defintion above is equal to the ``overlap" between $w^v_p$ and $w^*_p$, and so we call this expression the Overlap Utility. This means, in other words, that the utility from $p$ for a voter $v$ is equal to $w_p^v$ when $w_p^v \le w_p^*$, and $w_p^*$ otherwise. For example, if $w^v_p = 5$ and $w^*_p = 10$, then the utility derived by $v$ is $5$. 

We also define an additive concave utility model, which is a generalization of the Overlap utility model.
\begin{definition}[Additive concave utility]\label{def:addconcave}
In the fractional vote model, the additive concave utility of an outcome $\{w_p\}_{p \in \p}$ (where $\sum_{p \in \p} w_p = B$) for a voter $v$ is given by $\sum_{p \in \p} f_v^p(w_p)$, where each $f_v^p(.)$ is a non-decreasing concave function.
\end{definition}

 We use the definitions from above to summarize our results in the next section. A discussion of the properties of the above utility models is deferred to Section \ref{subsec:discutil}.

\subsection{Our contributions}\label{subsec:overview}
Our overall goal in this paper will be to present voting mechanisms that are particularly suited to the Participatory Budgeting problem, especially the Knapsack Voting scheme. These mechanisms, by getting voters to either optimize over the total budget, or compare projects based on their value-for-money, enable the voter to inherently consider costs and benefits, and moreover, are implementable using interactive digital tools. 
In building our case for Knapsack Voting, we aim to establish its advantages as follows:
\begin{enumerate}
\item Knapsack Voting, rather than $K$-approval voting, is the correct application of the approval voting principle to the participatory budgeting problem, and thereby avoids some issues of strategic voting.
\item It is easy to implement in a user-friendly way, and we have been able to deploy it in real elections successfully.
\item Empirically it agrees more with binary comparisons of value-for-money: a natural proxy of voter preferences.
\item It nudges voters to better consider cost-benefit trade-offs by making them recognize the budget constraint.
\end{enumerate}

\begin{figure}
\centering
  \includegraphics[width=240pt]{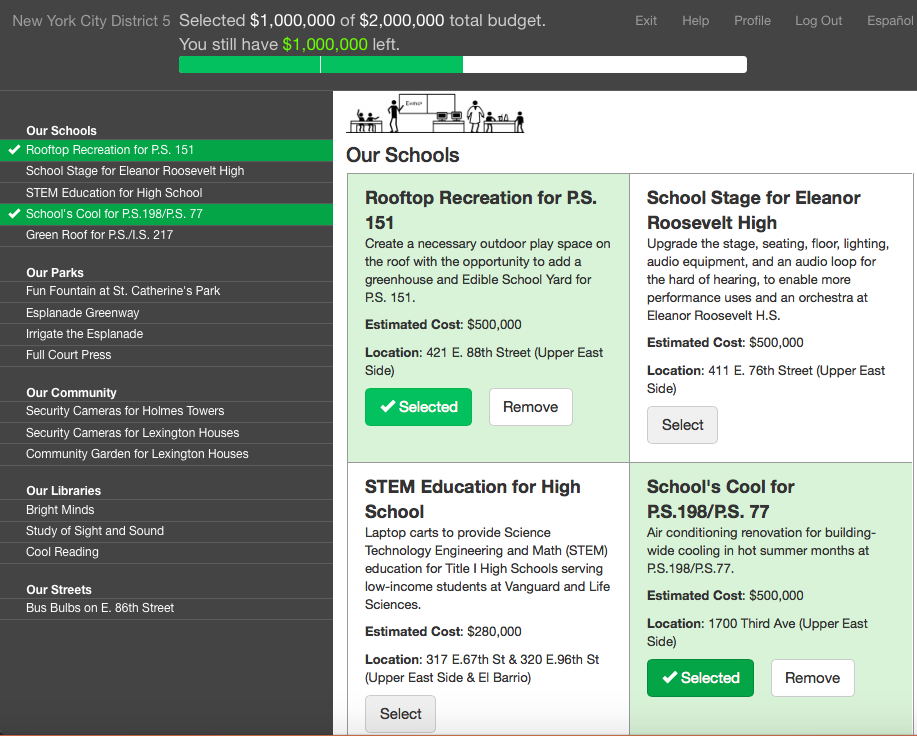}
  \caption{Our interface for Knapsack vote: example from PB NYC District 5 2015}
  \label{fig:knapsack}
\end{figure}

\subsubsection{Knapsack Voting}
Our primary scheme is called Knapsack Voting. The main idea behind this scheme is that each voter has to adhere to the budget constraint, thus keeping costs in mind while giving her vote -- thus internalizing the global constraints, while giving their preferences toward the outcome. Our interface for this method is shown in Figure \ref{fig:knapsack} (integral) and Figure \ref{fig:fracknapsack} (fractional).

We primarily look at the Knapsack Vote mechanism in the fractional setting. Most of our results extend to the integral case in an approximate fashion -- we defer a discussion of this setting to Section \ref{app:integral} in the Appendix.

We now give an informal definition of Knapsack Voting (for a formal one see Definition \ref{def:fracknapsack}).
\begin{definition}[\sc{Knapsack Vote}] \label{def:fracknapinformal}
The Knapsack Vote mechanism is defined as follows:
\begin{itemize}
\item  Each voter $v \in \V$ votes for an allocation $\{w^v_p\}_{p \in \p}$, such that $\sum_{p \in \p} w^v_p = B$.
\item The outcome is chosen according to approval scores in a per-dollar sense, as in Figure \ref{fig:fracvote} (Section \ref{subsec:discutil}).
\end{itemize}
\end{definition}
As we can see, while the elicitation method is very different, the aggregation method here is similar to K-approval in that it uses approval scores.

We analyze the strategic properties of Knapsack Voting assuming an $\ell_1$ cost model, in which, loosely speaking, the cost to a voter depends on how much the outcome differs from her preferred allocation as determined by the $\ell_1$ distance between the two (Definition \ref{def:l1model}). Under this model (and also the related Overlap utility model, via the equivalence Lemma \ref{lem:ell1overlap}), we show that Knapsack Voting is strategy-proof and welfare-maximizing (Theorems \ref{thm:EMDstratproof}, \ref{thm:welfmax}). 

\begin{result}\label{res:stratproof}
Under the $\ell_1$ cost model (and also the Overlap utility model), Knapsack Voting is strategy-proof and welfare-maximizing.
\end{result}

This result is analogous to the strategy-proofness results for Approval Voting under dichotomous preferences \citep{brams2007approval}. 

We also extend Knapsack Voting to more general settings with revenues, surpluses or deficits (as opposed to one with a fixed budget) where approval voting has no known analog, and prove similar results (Section \ref{subsec:rsd}, Theorem \ref{thm:budgbal}). 

Also, under the more general case of additive concave utilities, we characterize a weaker yet interesting notion of strategy-proofness of a voter's best response, based on the concept of sincerity \citep{niemi1984problem}: it is in a voter's best interest to vote for projects that she favors among those that are winning without her vote (Theorem \ref{thm:parttrue}). 

\begin{result}\label{res:partstratproof}
Under additive concave utilities, a voter's best response under Knapsack Voting is partially strategy-proof.
\end{result}

Together, these amount to substantial evidence that Knapsack Voting aligns the incentives of the voters with that of the decision maker. Note that these two results do not hold for K-approval voting\footnote{The utility functions are defined with respect to the voter's favorite budget allocation and the budget outcome chosen, and not what the voter expresses. Under such models, the results do not hold for K-approval -- e.g. it is not strategyproof under such utility models.}.

An interesting, and standard, way of understanding voting rules is viewing them as Maximum Likelihood estimators (MLEs) \citep{conitzer2012common,young1988condorcet}.  Votes are assumed to be drawn from a suitable noisy model parametrized by a ``ground truth" outcome, such that the voting rule is the MLE of the ``ground truth" given any realization of votes. For example, the Kemeny Young rule is the MLE of rankings drawn from the Mallows model. The Mallows model defines a probability distribution over all possible rankings, with the probability of each ranking depending on how much it differs from a given  ground truth order.  Since we are concerned with the problem of determining an outcome which is a subset of projects that satisfies the budget constraint, we will define a model that takes such a set as the ground truth, and defines a probability for all valid sets, depending on how much they differ from the ground truth. By interpreting Knapsack Voting (Section \ref{sec:knapsackml}) as the MLE of this natural noise model, we reinforce the aggregation method of Knapsack Voting.

\subsubsection{Value-for-money comparisons} In our second scheme, we elicit the voters' preferences based on their perceived value-for-money from projects. By value-for-money we mean the utility of a project normalized by its cost: if $v_{i,p}$ is the utility of voter $i$ from project $p$, then its value-for-money as perceived by $i$ is $\frac{v_{i,p}}{c_p}$. See Figures \ref{fig:bftb},\ref{fig:paperballot} for voting interfaces based on this idea, where voters compare pairs of projects, or rank their top projects, based on value-for-money. 

In the previous section, we discussed how most elections use paper ballots for those voters that opt out of the digital platform. Using the idea of value-for-money, we designed a paper ballot that accompanies Knapsack Voting. 

We will discuss this and other details about value-for-money in Section \ref{sec:vfmcomps}.

\begin{figure}
    \centering
    \includegraphics[width=200pt]{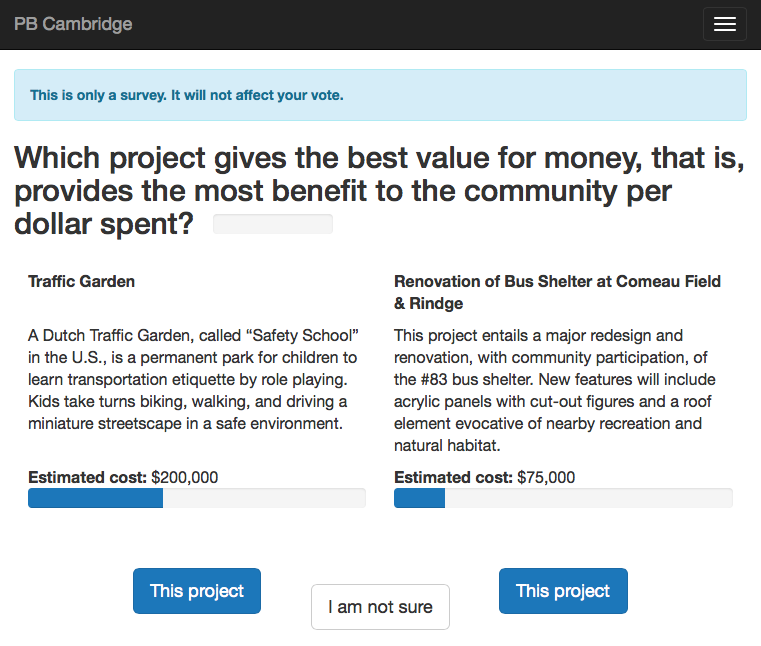}
    \caption{Value-for-money comparison: example from PB Cambridge 2015}
    \label{fig:bftb}    
\end{figure}

\subsubsection{Deployments and Data Analysis}\label{subsubsec:dataanalysis}
Using our digital voting system that has gained widespread acceptance in many cities/municipalities  across the nation, we tested our methods across various participatory budgeting elections. In most of these elections, K-approval was used as the official ballot, and in addition, we tested either the Knapsack or the value-for-money comparisons (Figure \ref{fig:bftb}) or both. We do value-for-money to estimate the aggregate pairwise preferences of voters.

We describe our experimental procedure in greater detail in Section \ref{sec:experiments}. Based on the data from our experiments, we make the following observations:
\begin{enumerate}
\item Knapsack Voting leads to a more economical consideration of the projects as compared to K-approval (\ref{subsec:knapsackdata}), 
\item The time taken by a voter to do Knapsack Voting is similar to that for K-approval. Value-for-money comparisons take much less time (Section \ref{subsec:timing}).
\item Knapsack Voting does better than K-approval in terms of agreement with  pairwise comparisons (Section \ref{subsec:knapvskapp}).
\end{enumerate}

We present empirical evidence in support of these observations in Section \ref{sec:experiments}. 

The above observations suggest a significant qualitative difference between the outcomes of Knapsack Voting and K-approval. In particular, Observation 3 above suggests that the outcome of Knapsack Voting is more in alignment with the aggregate preferences of the voters than that of K-approval. 

Observation 1 just reinforces the fact that Knapsack Voting leads to a more economical consideration of projects by the voters. It is natural to expect voters to pay more attention to costs of projects under Knapsack Voting.
And Observation 2 suggests that Knapsack Voting does not involve a much larger cognitive load on the voters than K-approval.  

However, we stop short of claiming that Knapsack Voting leads to outcomes that are more beneficial to society as a whole. It could be worthwhile to think of ways to compare the outcomes in terms of their long-term societal value.

\subsection{A discussion of utility models}\label{subsec:discutil}
As mentioned before, one of our goals will be to prove some good properties of Knapsack Voting, in particular strategy-proofness. To do so, we first need to make some modeling assumptions.\footnote{With no specific assumptions, strategy proof mechanisms are impossible, see Section \ref{subsec:related}.}

\begin{figure}
  \centering
  \includegraphics[width=230pt]{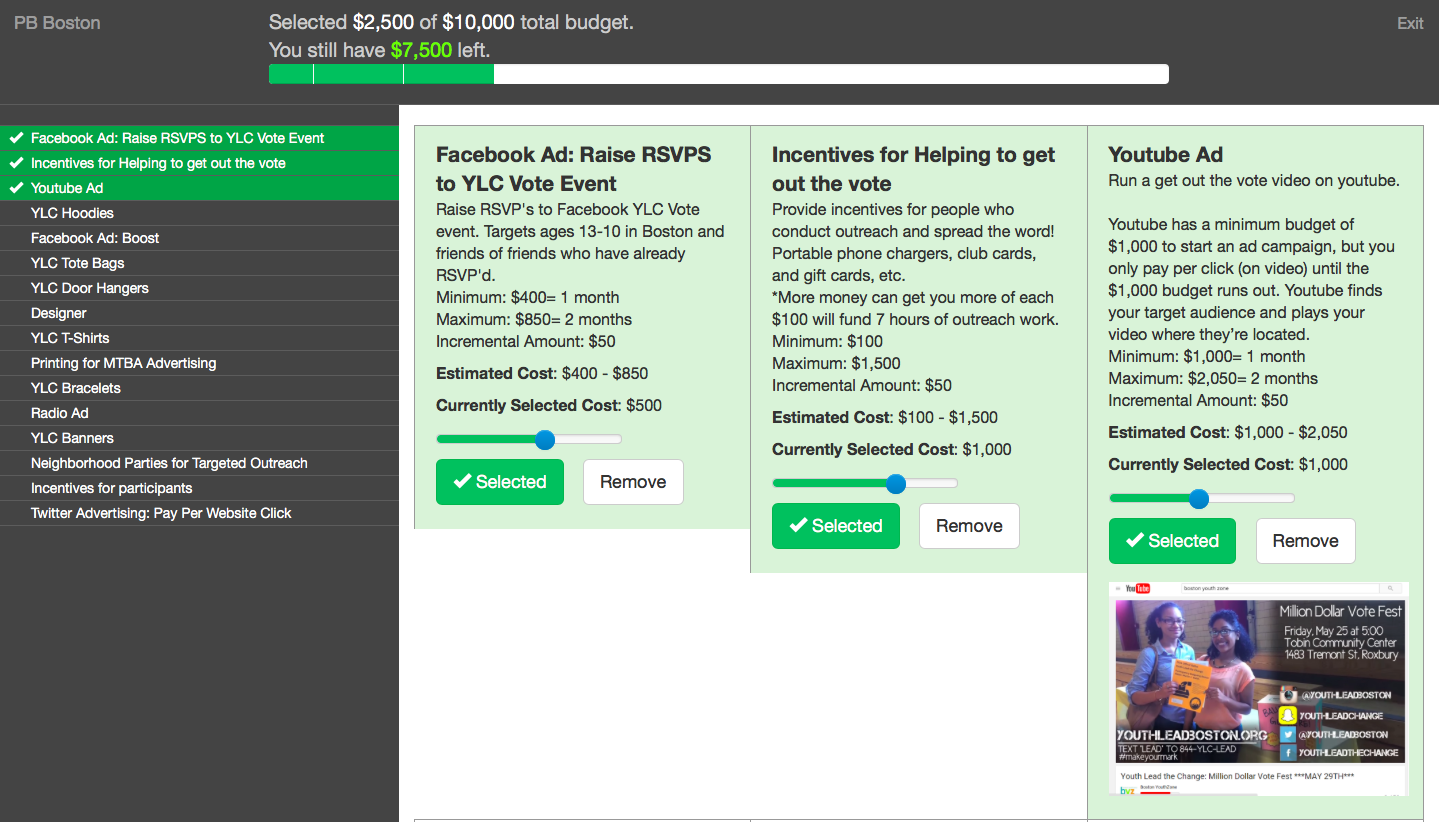}
  \caption{Experimental interface for Fractional Knapsack}
  \label{fig:fracknapsack}
\end{figure}

There is a wide range of budget problems where the allocation per project is not a fixed cost, but can be variable, possibly up to a certain upper limit (see Figure \ref{fig:fracknapsack} for an experimental interface). This setting is also very relevant to PB\footnote{see \url{https://pbstanford.org/boston16internal/knapsack} for a PB election where most of the projects allowed for flexible allocations}. The fractional setting is also relevant outside of PB. For example, the federal budget consists of a variable allocation of money to various sectors of the economy at large such as education, healthcare, defense, etc. To accommodate variable allocation, and to avoid combinatorial difficulty, we consider a fully fractional version of Knapsack Voting using a ``per-dollar" approach.
\begin{figure}[h!]
  \begin{minipage}{0.4\textwidth}
    \centering
    \includegraphics[scale=0.3]{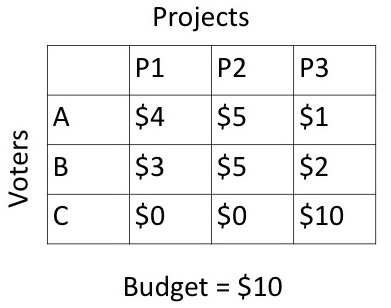}
    \caption{Preferred allocations of voters}
    \label{fig:frac_table}
  \end{minipage}
  \hspace{20pt}
  \begin{minipage}{0.5\textwidth}
  \centering
    \includegraphics[scale = 0.25]{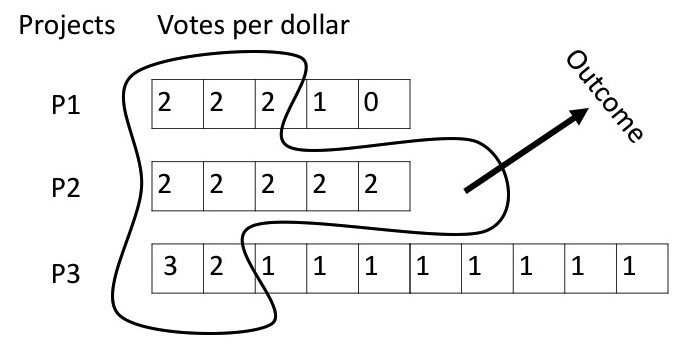} 
    \caption{Votes per dollar for each project}
    \label{fig:fracvote}
  \end{minipage}
\end{figure}
\begin{example} \label{eg:perdollar}
Given a budget of $10$, and three voters $A,B,C$, let's say that the preferred allocations of the voters for projects $P1,P2,P3$ of maximum costs $(5,5,10)$ are $(4,5,1)$, $(3,5,2)$, $(0,0,10)$ respectively (Figure \ref{fig:frac_table}). We divide each project into as many dollars as its maximum cost, and given that $A$ allocates $4$ to $P1$, we give one vote to each of the first $4$ dollars of $P1$, and so on. Figure \ref{fig:frac_table} shows how many votes each dollar of each project gets. If we select those dollars with the highest number of votes, we get the allocation $(3,5,2)$. Notice that here we didn't have to break ties to get a feasible allocation.
\end{example}

As mentioned earlier, we study the above problem under different utility models. We now discuss them below -- these models were formally defined earlier in Section \ref{subsec:votutilmodels}.

\subsubsection{$\ell_1$ costs and Overlap utility}
In Section \ref{sec:knapsack}, we look at a model where the disutility of a voter for a given allocation of the budget is given by how much her preferred allocation differs from it in terms of the $\ell_1$ distance between the two allocations. Minimizing $\ell_1$ distances to reach a compromise decision has some precedence in public policy literature \citep{andre2010designing}. Assume that the ideal preference of a voter $v$ corresponds to allocating an (integral) amount $w_p^v$ to project $p \in \p$, such that $\sum_{p \in \p} w^v_p = B$, where $B$ is the total budget. Then we know from Definition \ref{def:l1model} that the cost for a voter $v$ for an outcome $\{w_p\}_{p \in \p}$ is given by $\sum_{p \in \p} |w^v_p - w_p|$.

We must note here that applying the $\ell_1$ cost model to the case where outcomes are allowed to be of size smaller than $B$ (i.e. violating Assumption \ref{ass:lastproject}, e.g., the purely integral setting) does not lead to anything meaningful. Indeed doing so could violate the free disposal of utilities.\footnote{We thank one of the reviewers for pointing this out via the described example.}
\begin{example}\label{eg:freedisp}
Consider a total budget of $2$, with three projects $a$, $b$ and $c$. Take a voter that has a preferred allocation of $1$ to $a$. Then an allocation to $b$ and $c$ of $1$ each has a cost of $-3$ for the voter, whereas just an allocation of $1$ to project $b$ has a cost of $-2$. 
\end{example}

As mentioned in Definition \ref{def:overlap1}, we define another model called Overlap Utility, which we will see is the analog of dichotomous preferences as applied to approval voting \citep{brams2007approval}: for a voter's preferred allocation, we look at each dollar subproject (see Section \ref{sec:knapsack} for a formal definition of subprojects) in it, and count it as $1$ towards the utility if the same dollar subproject is in the outcome allocation and $0$ if not.

However, it is easy to see that the Overlap utility model satisfies\footnote{Even in the approximately integral model mentioned in Appendix \ref{app:integral}}
\begin{enumerate}
\item convexity of preferences, and
\item free disposal of utilities.
\end{enumerate}

To see why it satisfies free disposal, let's look at Example \ref{eg:freedisp} again, where a voter has a preferred allocation of 1 to $a$. An allocation to $b$ and $c$ of 1 has a utility of 0 since there is no overlap. If the voter agreed completely with this allocation, she would have a utility of 2. And the allocation differs from her ideal point with respect to allocating 2 units of the budget, thereby giving a net utility of 0. If we look at an allocation of 1 to just $b$, then the utility is still 0. If the voter agreed completely with this allocation, she would have a utility of 1. And the allocation differs from her ideal point with respect to 1 unit of the budget, again giving her a net utility of 0.

In the fractional setting, the $\ell_1$ model is equivalent to the Overlap utility. This equivalence (see Lemma \ref{lem:ell1overlap}) is useful in the proof of Result \ref{res:stratproof}. 

\subsubsection{Additive concave utilities}
The most general model we consider is one where voters have utilities that are additive over the projects in $\p$.  Moreover, the utility from each project is concave in the amount of allocation given to that project. It is easy to see that this model (see Definition \ref{def:addconcave}) is a generalization of the Overlap utility model. The Overlap utility is obtained as a special case as follows: given a voter's ideal allocation $\{w_p^v\}_{p \in \p}$, $f_p^v(w_p)$ is defined to be $\min \{w_p,w_p^v\}$. We use this model in Section \ref{subsec:parttrue} to prove our partial strategy-proofness result (Result \ref{res:partstratproof} in the previous section). 

\subsection{Related Work}\label{subsec:related}
A wide variety of voting procedures, both ranked and non-ranked, have been studied in social choice literature \citep{brams2002voting}. Plurality and Approval Voting among non-ranked procedures, and Borda, Copeland and Kemeny-Young \citep{levin1995introduction} among ranked, are perhaps the most well known. In approval voting, each voter can vote for, or ``approve of", any number of candidates on the ballot. Each candidate gets a  vote from every voter that voted for her, and the candidate with the most votes wins. Approval voting has been widely studied as an alternative to plurality voting, with some compelling advantages \citep{brams1993approval}. Various modifications to approval voting, such as added constraints \citep{brams1990constrained}, and cumulative voting \citep{bhagat1984cumulative}, that help improve the representation of minorities or under-represented groups in committees have been examined. In this paper, we devise the right adaptation of the approval paradigm to participatory budgeting elections.

Another parallel stream of literature in social choice theory has delved deeply into the question of manipulability of voting rules. It has been  shown that the only reasonable social choice functions with single/multiple winners that are not susceptible to strategic manipulation by voters are degenerate forms of dictatorships \citep{gibbard1973manipulation,satterthwaite1975strategy}. Such impossibility results \citep{duggan2000strategic} rule out the existence of strategy-proof mechanisms for our setting in general. 

However,  strategy-proof schemes were shown to be possible in restricted domains like single-peaked preferences \citep{moulin1980strategy,barbera1993generalized,nehring2002strategy}. Inspired by Black's Median Voter Theorem \citep{black1948decisions,moulin1980strategy} characterized Generalized Median Voter Schemes as the only strategy-proof rules under single-dimensional single-peaked preferences. This was generalized to multiple dimensions for preferences on the cartesian box \citep{border1983straightforward,barbera1993generalized}, and general subsets of the cartesian box \citep{barbera1997voting}. The results of \citep{barbera1997voting} imply that, under single-peaked preferences, the only voting rules that could be be both strategy-proof and non-dictatorial for our problem are Generalized Median Voter Schemes (GMVS). Unfortunately, GMVSs do not respect the budget constraint for three or more projects.

We propose an aggregation rule that finds the geometric median of the votes on the simplex that satisfies the budget constraint and show that strategy-proofness holds under specific utility models like the $\ell_1$ cost model (Definition \ref{def:l1model}), an adaptation of dichotomous preferences to our setting. It is also interesting to compare our setting with the axioms of Arrow's Impossibility Theorem \citep{arrow2012social}, especially the Independence of Irrelevant Alternatives (IIA). With a weaker form of IIA \citep{campbell2000weak}, Knapsack Voting, akin to Approval Voting with dichotomous preferences, satisfies all three of Arrow's axioms. In recent work\footnote{which appeared while our manuscript was under review.}, Freeman et al. characterize a broad class of strategy-proof ``moving phantom" mechanisms under the $\ell_1$ cost model \citep{freeman2019truthful}. Interestingly, their social-welfare maximizing method (which is equivalent to Knapsack Voting upto tie-breaking) is the unique Pareto-optimal mechanism in the above-mentioned class.

From the more recent literature in Computational Social Choice, our work is similar in spirit to some recent work on identifying the trade-offs between the utilities from various possible societal activities \citep{conitzer2015crowdsourcing}. Our work also involves capturing trade-offs, but only so far as to obtain the best allocation of the budget, and we propose voting methods that use ideas from the Knapsack Problem to do so. There has also been some work on selecting committees under weight or cost constraints given complete rankings from voters \citep{klamler2012committee}. However, our work is different in that we do not get complete rankings from voters, and design ways of eliciting preferences of voters. Under the utilitarian distortion framework, there has been some recent work on designing a mechanisms for PB \citep{benade2017preference,danitruthful}. However, the outcomes here could be a factor $\sqrt{m}\log m$ off of the optimal, and in general, do not extend to the fractional case. Another drawback of such mechanisms is that they are not transparent enough to be used in formal ballots. 

There has also been some work on developing visualization tools for voters to make sense of different items in complex budget problems, e.g. the federal budget \citep{kim2016budgetmap}. In our setting, the items on the ballot are simple and well defined, for instance building a park, or improving streets, and can be described concisely on our digital voting platform for voters to make an informed decision.

The problem of choosing the best subset of candidates given global preference information has received some attention lately \citep{klamler2012committee,lu2011budgeted}, and various methods have been proposed. For the problem of selecting optimal sets of weighted candidates given a sum weight constraint, there is some work on algorithms that, given the knowledge of a global preference relation among the candidates, compute optimal sets for various objective functions \citep{klamler2012committee}. Another treatment of this problem involves a recommendation that pairs each voter to one of these candidates, thereby assuming that each voter derives satisfaction from one of the candidates in the chosen subset \citep{lu2011budgeted}. In our context, eliciting full rankings on the candidates is not feasible and, moreover, we motivate our methods in terms of social welfare, with each voter benefiting from all the chosen candidates. There is also some work \citep{fain2016core} with an overlapping set of co-authors, that studies the fairness properties of mechanisms for participatory budgeting, largely using
the notion of core.

\section{Knapsack Voting: Imposing budget constraints}\label{sec:knapsack}
 In this section, we show that Knapsack Voting is strategy-proof (Theorem \ref{thm:EMDstratproof}) and welfare-maximizing (Theorem \ref{thm:welfmax}) under the $\ell_1$ costs model. By Lemma \ref{lem:ell1overlap}, the same results hold for the Overlap utility model too. We then extend Knapsack Voting to more complex settings with revenues, deficits or surpluses (Equation \ref{eqn:knapsackbudgbal}) for which there is no known analog of Approval voting. We prove that Knapsack Voting is strategy-proof in this setting (Theorem \ref{thm:budgbal}). We also prove a partial strategy-proofness result in the case of additive concave utilities (Theorem \ref{thm:parttrue}).

 In what follows, we adopt the Fractional vote model, as in Definition \ref{def:fracvote}. More formally, since each project is fractionally implementable, we split each project $p \in \p$ into $c_p$ different ``per-dollar sub-projects"  $D^p_{1}, D^p_{2}, \ldots, D^p_{c_p}$, and collect all the sub-projects into a set $\pp$. The problem then reduces to choosing $B$ out of the $C$ candidates in $\pp$. Given a vote from a voter $v$ using our Knapsack interface (Figure \ref{fig:knapsack}), it is guaranteed to be \emph{consistent} after the per-dollar conversion, i.e., a set $S_v \subset \pp$ such that for all $p \in \p$, if $D^p_t \in S_v$, then $D^p_{t^\prime} \in S_v$ for all $1 \leq t^\prime < t$. 

 For any $j \in \pp$, define its score as $\score(j) \triangleq |\{ v \in \V : j \in S_v \}|$. We will use a \emph{consistent deterministic tie-breaking order}: a strict ordering $\prec$ on $\pp$ such that for all $p \in P$, $1 \leq t^\prime < t \leq c_p \iff D^p_{t^\prime} \prec D^p_t$ (if $j \prec k$, then $j$ gets priority over $k$ when $\score(j) = \score(k)$).
\begin{definition}[Knapsack Vote] \label{def:fracknapsack}
The Knapsack Vote mechanism is defined as follows:
\begin{itemize}
    \item Each voter $v \in \V$ submits a consistent subset $S_v \subseteq \pp$, such that it satisfies a budget constraint  $|S_v| = B$.
    \item The winning set is given by $\underset{S : |S| = B}{\arg \max} \sum_{j \in S} \score(j)$, using a consistent deterministic tie-breaking order $\prec$.
\end{itemize}
\end{definition}
Note that given consistent votes, and a consistent tie-breaking order, the Knapsack Vote gives a ``consistent" outcome. 

We also redefine Overlap utility (Definition \ref{def:overlap1}) in terms of the per-dollar notation above -- this will be of use shortly.
\paragraph{Overlap utility}
Let $S_v \in \pp$ and $S^* \in \pp$ be the allocations in the per-dollar sense corresponding to $\{w^v_p\}_{p \in \p}$ and $\{w^*_p\}_{p \in \p}$.  Restating Definition \ref{def:overlap1} in terms of subprojects:
\begin{definition}\label{def:overlap}
The utility of the voter $v$ is given by $|S_v \cap S^*|$, or equivalently, $\sum_{p \in \p} \min\{w^v_p,w^*_p\}$. 
\end{definition}
In this quantity, the term corresponding to $p$ is equal to the ``overlap" between $w^v_p$ and $w^*_p$, and so we call this expression the Overlap Utility.

\subsection{Strategy-proofness under the $\ell_1$ cost model}\label{subsec:stratproof}
Given the set of projects $\p$, let's say that the true preference $S_v$ of each voter $v$ corresponds to allocating an (integral) amount $w^v_p$ to project $p \in \p$, such that $\sum_{p \in \p} w_p = B$. Any outcome that deviates from this allocation would result in some dis-utility for the voter. Let $\{w^*_p\}_{p \in \p}$ denote the final outcome. By Definition \ref{def:l1model}, the dis-utility that each voter gets is equal to the $\ell_1$ distance between $\{w^v_p\}_{p \in \p}$ and $\{w^*_p\}_{p \in \p}$, i.e., $\sum_{p \in \p} |w^v_p - w^*_p|$.


In the above, the budget allocation to each project $p$ is flexible, and can take any integral value in $[0,c_p]$, with the votes are restricted to the simplex determined by $\sum_{p \in \p}w_p = B$. The outcome of Knapsack Voting can be defined as an $\ell_1$-median restricted to the above-mentioned simplex (follows from Lemma \ref{lem:ell1overlap} and Theorem \ref{thm:welfmax}), i.e., an allocation that minimizes the sum of the $\ell_1$ distances to the votes. \footnote{This works in continuous space too, with a suitably defined tie-breaking rule} Under the Overlap utility model, by defining the outcome as the geometric $\ell_1$ median, we get the following result:

\begin{theorem} \label{thm:EMDstratproof}
For Knapsack Voting, under the $\ell_1$ cost model, voting for her preferred allocation $S_v$ is a weakly dominant strategy for voter $v$.
\end{theorem}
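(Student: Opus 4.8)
The plan is to reduce the strategic analysis to the per-dollar picture and exploit the additive, separable structure of the Overlap utility (which by Lemma~\ref{lem:ell1overlap} is equivalent to the $\ell_1$ cost model in the fractional setting). Fix a voter $v$ with true allocation $S_v$, and fix the votes of everyone else. For each per-dollar sub-project $j \in \pp$, let $\score_{-v}(j)$ denote its score from the other voters only, so that including $j$ in $v$'s report adds exactly $1$ to $\score(j)$ and otherwise leaves all scores unchanged. The winning set is the $B$ highest-scoring sub-projects under the consistent tie-breaking order $\prec$, subject to the outcome being consistent; but since votes are consistent and $\prec$ is consistent, the unconstrained top-$B$ selection is automatically consistent, so we may treat the mechanism as simply ``pick the $B$ sub-projects of largest adjusted score''. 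The voter's utility from an outcome $S^*$ is $|S_v \cap S^*|$, i.e.\ it is the number of her own sub-projects that end up winning; she does not care which of the sub-projects outside $S_v$ are chosen.

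Next I would argue the single-crossing / greedy-exchange step: among all consistent reports $S$ of size $B$ that $v$ could submit, reporting $S_v$ maximizes $|S_v \cap S^*(S)|$, where $S^*(S)$ is the resulting winner set. The key observation is that $v$'s report influences the outcome only through which sub-projects receive her extra $+1$. Consider any report $S \neq S_v$. I would show that one can transform $S$ into $S_v$ by a sequence of single-element swaps — remove some $k \in S \setminus S_v$ and add some $j \in S_v \setminus S$ — each of which never decreases $|S_v \cap S^*|$. When we move the $+1$ from $k$ to $j$: the score of $k$ drops by $1$ and the score of $j$ rises by $1$. The only sub-projects whose membership in the top-$B$ can change are $k$ (which may drop out) and $j$ (which may enter), plus the marginal sub-project at the $B$/$(B{+}1)$ boundary. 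Since $k \notin S_v$, losing $k$ from $S^*$ cannot hurt $v$'s utility; since $j \in S_v$, gaining $j$ in $S^*$ can only help. The boundary sub-project that gets displaced when $j$ enters is some sub-project that was previously winning with a score no larger than the others, and the net effect on $|S_v \cap S^*|$ is shown to be $\geq 0$ by a short case analysis on whether $j$ and $k$ are in $S^*$ before and after. Iterating over all of $S_v \setminus S$ converts $S$ into $S_v$ without ever decreasing utility, establishing that $S_v$ is a best response.

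Finally, since this argument holds for every fixed profile of the other voters' reports, reporting $S_v$ is a weakly dominant strategy, which is exactly the claim; combined with Lemma~\ref{lem:ell1overlap} the same conclusion holds verbatim under the $\ell_1$ cost model, proving the theorem.

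I expect the main obstacle to be the boundary bookkeeping in the exchange step: when $v$ shifts her vote from $k$ to $j$, the sub-project that actually leaves the winning set need not be $k$ itself (if $k$ remains above the threshold) but rather whatever sub-project sat just above the cutoff, and ties must be resolved using the fixed order $\prec$. One must verify carefully, with the tie-breaking rule pinned down, that the displaced sub-project is never one of $v$'s own sub-projects that she would rather keep — or, if it is, that $j$ entering compensates exactly. Handling the tie cases cleanly (where $j$'s new score merely equals rather than exceeds the cutoff, so $\prec$ decides) is the delicate part; everything else is routine once the problem is phrased in the per-dollar coordinates.
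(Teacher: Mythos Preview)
Your proposal is correct and follows essentially the same exchange argument as the paper: start from an arbitrary report, swap one sub-project at a time toward $S_v$, and verify via a two-case analysis (on which of the swapped elements actually crosses the winning boundary) that Overlap utility never decreases. The only detail you leave implicit that the paper spells out is how to pick the swapped pair so that every intermediate report remains \emph{consistent}---the paper removes the highest-index dollar $D^p_t$ of some project and adds the lowest-missing-index dollar $D^q_z$ of another, which is exactly the care needed to make your ``short case analysis'' go through cleanly with the consistent tie-breaking order~$\prec$.
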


\begin{proof}
Let $S_{-v}$ and $\score_{-v}(\cdot)$ be the outcome determined by the votes of everyone except $v$ using the Knapsack Voting rule (Definition \ref{def:fracknapsack}). As mentioned earlier in this section, for any $i \in \pp$, $\score_{-v}(i) = |\{u \in \V \setminus \{v\} : i \in S_u\}|$, where $S_u$ denotes the vote of any voter $u \in \V \setminus \{v\}$ (since we are looking for a dominant strategy for voter $v$, $S_u$ here can be any valid vote, not necessarily the truthful report of voter $u$).  Assume that $T_v \neq S_v$ is a best response of $i$. Let the outcome after incorporating $T_v$ be $\score(\cdot)$ and $S$. 

Let $j \in T_v \setminus S_v$ such that if $j = D^p_t$ for some $p \in \p$ (recall the definition of per-dollar sub-projects from earlier in this section), then $D^p_{t^\prime} \notin T_v$ for all $t^\prime > t$.  Choose some $k \in S_v \setminus T_v$ such that if $k = D^q_z$ for some $q \in \p$, then $D^q_{z^\prime} \in T_v$ for all $z^\prime < z$. Such a $k$ exists because $T_v$ and $S_v$ are consistent and of the same size $B$. Let $T_v^\prime \triangleq T_v \cup \{k\} \setminus \{j\}$, and the outcome here be $\score^\prime(\cdot)$ and $S^\prime$. We will show that $T_v^\prime$ is also a best response for $v$. If $S=S^\prime$, then we have nothing to prove. In what follows, we will assume $S \neq S^\prime$.

We have that $\score^\prime(j) = \score(j) - 1$, $\score^\prime(k) = \score(k) + 1$, and for all $l \in \pp \setminus \{j,k\}$, $\score^\prime(l) = \score(l)$. Note that the only change here is that the score of $j$ decreases, and the score of $k$ increases. As a result, for any given consistent tie-breaking order (as discussed at the head of this section), $S^\prime \setminus S$ must be singleton. Further, we must have either $j \in S \setminus S^\prime$ or $k \in S^\prime \setminus S$, i.e., any change in outcome must involve either $j$ moving from within the winning set to without, or $k$ from without to within (or both).  

If $j \in S \setminus S^\prime$, it must be that $S \setminus \{j\} \subset S^\prime$, since for any $l \in S \setminus \{j\}$, $\score^\prime(l) \geq \score(l)$. And since $S^\prime \setminus S$ is a singleton set, say it contains $m \in \pp$ (possibly $m=k$), the change in utility of voter $v$ is $\indicator(m \in S_v) - \indicator(j \in S_v) = \indicator(m \in S_v) \geq 0$ (here $\indicator(\cdot)$ is the indicator function that takes the value $1$ if the argument is true, and $0$ if not). In other words, the change in utility from removing $j$ is 0, and that from adding any $m$ in its place cannot be negative.

Similarly, if $k \in S^\prime \setminus S$, then $S \setminus S^\prime = \{m^\prime\}$ for some $m^\prime \in \pp$. And the change in utility is $\indicator(k \in S_v) - \indicator(m^\prime \in S_v) = 1 - \indicator(m^\prime \in S_v) \geq 0$.

By repeating this process until we only have elements in $S_v$, we have not decreased the utility of voter $v$. Therefore, the utility obtained by voting for $S_v$ cannot be strictly dominated by that for any $T_v \neq S_v$.
\end{proof}

By a simple modification, the above proof extends to the Overlap utility model as well.
\begin{corollary}\label{cor:EMDstratproofoverlap}
For Knapsack Voting, under the Overlap utility model, voting for her preferred allocation $S_v$ is a weakly dominant strategy for voter $v$.
\end{corollary}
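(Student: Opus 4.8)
The plan is to recycle the proof of Theorem~\ref{thm:EMDstratproof} essentially verbatim, noting that it never really used the $\ell_1$ structure except through one quantity. First I would observe that the Knapsack Vote outcome function of Definition~\ref{def:fracknapsack} is specified purely in terms of the scores $\score(\cdot)$ and the consistent tie-breaking order $\prec$; it does not refer to any utility model. Hence the entire set-up of the earlier proof carries over unchanged: fix the partial outcome $(\score_{-v}(\cdot), S_{-v})$ induced by the other voters, and let $T_v \neq S_v$ be an arbitrary best response of $v$, with induced outcome $(\score(\cdot), S)$. What does change is the bookkeeping of $v$'s utility, and the point is that the proof already does this bookkeeping in the right way for the Overlap model: by Definition~\ref{def:overlap}, $v$'s utility for an outcome $S$ is $|S \cap S_v|$, and every utility-change expression appearing in the proof of Theorem~\ref{thm:EMDstratproof} is exactly a difference of the form $\indicator(\cdot \in S_v) - \indicator(\cdot \in S_v)$, i.e., the change in $|S \cap S_v|$ under a single-element swap of outcome elements.

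Concretely, I would repeat the swap construction: choose $j \in T_v \setminus S_v$ that is ``topmost'' within its project and $k \in S_v \setminus T_v$ that is ``bottommost'' within its project, and set $T_v^\prime \triangleq T_v \cup \{k\} \setminus \{j\}$ with induced outcome $(\score^\prime(\cdot), S^\prime)$. Exactly as before, the only score changes are $\score^\prime(j) = \score(j) - 1$ and $\score^\prime(k) = \score(k) + 1$, so $S^\prime \setminus S$ is a singleton and any change in the winning set involves $j$ leaving it or $k$ entering it. In the first case, $S \setminus \{j\} \subseteq S^\prime$ and $S^\prime \setminus S = \{m\}$, so the Overlap utility changes by $|S^\prime \cap S_v| - |S \cap S_v| = \indicator(m \in S_v) - \indicator(j \in S_v) = \indicator(m \in S_v) \ge 0$, using $j \notin S_v$. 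In the second case, $S \setminus S^\prime = \{m^\prime\}$ and the change is $\indicator(k \in S_v) - \indicator(m^\prime \in S_v) = 1 - \indicator(m^\prime \in S_v) \ge 0$, using $k \in S_v$. Iterating the swap until the report coincides with $S_v$ shows $|S \cap S_v|$ never decreases, so $S_v$ weakly dominates $T_v$; this is the ``simple modification'' promised in the text.

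As a shortcut one can instead just quote Lemma~\ref{lem:ell1overlap} together with Theorem~\ref{thm:welfmax}: since the budget constraint forces both $S_v$ and any outcome $S$ to have size exactly $B$, the $\ell_1$ distance between them equals $2B - 2|S \cap S_v|$, so minimizing $\ell_1$ cost and maximizing Overlap utility are the same optimization with the same tie-broken solution, and Theorem~\ref{thm:EMDstratproof} transfers immediately. I would present the self-contained swap argument for completeness and mention the equivalence only to reconcile the two phrasings. I do not expect a genuine obstacle here; the one thing to be careful about is that the affine identity $\ell_1 = 2B - 2(\text{overlap})$ — and hence the whole reduction — depends on both sides having total size $B$ (Assumption~\ref{ass:lastproject} and $|S_v| = |S| = B$ in Definition~\ref{def:fracknapsack}), so the corollary should not be over-claimed for outcomes of smaller size, which Example~\ref{eg:freedisp} already shows to be problematic.
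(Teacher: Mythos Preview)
Your proposal is correct and matches the paper's approach exactly: the paper offers no separate argument beyond the single sentence ``By a simple modification, the above proof extends to the Overlap utility model as well,'' and you have correctly identified that the ``modification'' is essentially nil, since the utility-change calculations in the proof of Theorem~\ref{thm:EMDstratproof} are already written as differences of indicators $\indicator(\cdot \in S_v)$, i.e., changes in $|S \cap S_v|$. Your alternative shortcut via Lemma~\ref{lem:ell1overlap} is also how the paper ultimately packages the equivalence in Corollary~\ref{cor:intvotestratoverlap}.
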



Under truthful voting, the Knapsack Voting outcome maximizes the social welfare. 
\begin{theorem}\label{thm:welfmax}
The truthful dominant strategy equilibrium for Knapsack Voting is welfare-maximizing under $\ell_1$ costs.
\end{theorem}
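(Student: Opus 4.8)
The plan is to reduce the welfare statement to the already-established combinatorial description of the Knapsack Voting outcome as an $\arg\max$ of total per-dollar scores, using the $\ell_1$--Overlap equivalence. First I would invoke Lemma \ref{lem:ell1overlap}: in the fractional setting, under Assumption \ref{ass:lastproject}, for any voter $v$ with ideal allocation $\{w^v_p\}_{p\in\p}$ and any feasible outcome $\{w_p\}_{p\in\p}$ (so that $\sum_p w^v_p = \sum_p w_p = B$), applying the identity $|x-y| = x + y - 2\min\{x,y\}$ coordinate-wise gives
\[
\sum_{p \in \p} |w^v_p - w_p| \;=\; 2B - 2\sum_{p \in \p} \min\{w^v_p, w_p\}.
\]
Summing over $v \in \V$, the total $\ell_1$ dis-utility equals $2B\,|\V| - 2\sum_{v \in \V}\sum_{p\in\p}\min\{w^v_p,w_p\}$, so minimizing total dis-utility over feasible outcomes is equivalent to maximizing the total Overlap utility $\sum_{v\in\V}\sum_{p\in\p}\min\{w^v_p,w_p\}$.

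Next I would rewrite that quantity in per-dollar notation. Writing $S$ for the consistent subset of $\pp$ corresponding to $\{w_p\}$ and $S_v$ for that of $\{w^v_p\}$, Definition \ref{def:overlap} gives $\sum_p \min\{w^v_p, w_p\} = |S_v \cap S|$, and hence
\[
\sum_{v \in \V} \sum_{p \in \p} \min\{w^v_p, w_p\} \;=\; \sum_{v \in \V} |S_v \cap S| \;=\; \sum_{v \in \V} \sum_{j \in S} \indicator(j \in S_v) \;=\; \sum_{j \in S} \score(j),
\]
by interchanging the order of summation and using $\score(j) = |\{v \in \V : j \in S_v\}|$. Thus a welfare-maximizing feasible outcome is exactly a maximizer of $\sum_{j \in S} \score(j)$ over consistent size-$B$ subsets $S$.

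Finally I would argue this coincides with the outcome of Definition \ref{def:fracknapsack}. The Knapsack rule maximizes $\sum_{j\in S}\score(j)$ over \emph{all} size-$B$ subsets $S \subseteq \pp$, but since every vote $S_v$ is consistent, scores are non-increasing along each project, i.e.\ $\score(D^p_1) \geq \score(D^p_2) \geq \cdots \geq \score(D^p_{c_p})$; consequently any maximizer selected with a consistent tie-breaking order $\prec$ is itself consistent, so the unconstrained $\arg\max$ is feasible and its value equals that of the constrained problem. Combining this with Theorem \ref{thm:EMDstratproof} (truthful reporting $S_v$ is a weakly dominant strategy for every voter), the truthful dominant-strategy equilibrium produces an outcome that maximizes $\sum_{j\in S}\score(j)$, hence maximizes social welfare under $\ell_1$ costs. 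The only delicate point is exactly this last feasibility argument — showing that dropping the consistency constraint on $S$ neither changes the optimum nor yields an infeasible selected optimum — and it is handled entirely by score-monotonicity together with the consistent tie-breaking order; the rest is the one-line identity $|x-y| = x+y-2\min\{x,y\}$ and a swap of summation order.
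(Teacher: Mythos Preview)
Your proposal is correct and follows essentially the same route as the paper: rewrite $\sum_{j\in S}\score(j)$ as $\sum_{v}|S_v\cap S|$ by swapping the order of summation, then invoke Lemma~\ref{lem:ell1overlap} to translate Overlap utility into $\ell_1$ cost. The only addition is that you spell out the score-monotonicity argument for why the unconstrained $\arg\max$ over size-$B$ subsets is automatically consistent; the paper simply asserts this fact right after Definition~\ref{def:fracknapsack} and then, in the proof, restricts the $\arg\max$ to consistent sets directly.
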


Before proving the above theorem, we will show an equivalence between the $\ell_1$ cost model and the Overlap utility model.

\subsubsection{Overlap utility}
Let $S_v$ and $S^*$ be the allocations in the per-dollar sense corresponding to $\{w^v_p\}_{p \in \p}$ and $\{w^*_p\}_{p \in \p}$. In the budgeted case being treated in this section (cf. Section \ref{subsec:rsd}), the utility given by the $\ell_1$ model is equal to the Overlap utility. This equivalence is of independent interest, and is also useful in the proof of Theorem \ref{thm:EMDstratproof}. We now prove this equivalence formally.

\begin{lemma}\label{lem:ell1overlap}
$|S_v \cap S^*| = B - \frac{1}{2} \sum_{p \in \p} |w^v_p - w^*_p|$.
\end{lemma}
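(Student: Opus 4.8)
The plan is to unwind both sides into sums over projects and then apply the elementary identity $\min\{a,b\} = \tfrac12\bigl(a+b-|a-b|\bigr)$.

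First I would recall how the per-dollar sets arise: by construction $S_v = \{D^p_t : p \in \p,\ 1 \le t \le w^v_p\}$ and $S^* = \{D^p_t : p \in \p,\ 1 \le t \le w^*_p\}$, and both are consistent. Hence, for each fixed project $p$, the sub-projects of $p$ that lie in both $S_v$ and $S^*$ are exactly $D^p_1,\dots,D^p_{\min\{w^v_p,w^*_p\}}$, and since sub-projects of distinct projects are disjoint we get $|S_v \cap S^*| = \sum_{p \in \p} \min\{w^v_p, w^*_p\}$ — this is just the restatement already recorded in Definition \ref{def:overlap}.

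Next I would apply, termwise, the identity $\min\{a,b\} = \tfrac12\bigl(a+b-|a-b|\bigr)$ with $a = w^v_p$ and $b = w^*_p$, and sum over $p$:
\[
|S_v \cap S^*| \;=\; \frac12 \sum_{p\in\p} w^v_p \;+\; \frac12 \sum_{p\in\p} w^*_p \;-\; \frac12 \sum_{p\in\p} |w^v_p - w^*_p|.
\]
Finally I would invoke Assumption \ref{ass:lastproject} — equivalently, the budget constraints $|S_v| = B$ in Definition \ref{def:fracknapsack} and $\sum_{p} w^*_p = B$ for the outcome — which give $\sum_{p\in\p} w^v_p = \sum_{p\in\p} w^*_p = B$, so the first two terms collapse to $B$, yielding the claimed equality.

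There is no real obstacle here; the only point worth flagging is that the collapse of the ``halved total'' to $B$ genuinely relies on the fact that \emph{both} the vote $S_v$ and the outcome $S^*$ use up the entire budget. This is precisely why, as remarked after Example \ref{eg:freedisp}, the equivalence between the $\ell_1$ cost model and the Overlap utility model is stated only under Assumption \ref{ass:lastproject}; dropping it leaves only the weaker identity $|S_v \cap S^*| = \tfrac12\bigl(|S_v| + |S^*|\bigr) - \tfrac12\sum_{p\in\p}|w^v_p - w^*_p|$.
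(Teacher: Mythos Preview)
Your proof is correct and follows essentially the same approach as the paper: both start from $|S_v \cap S^*| = \sum_{p\in\p}\min\{w^v_p,w^*_p\}$ and then use the two budget constraints $\sum_p w^v_p = \sum_p w^*_p = B$ to reach the result. The only cosmetic difference is that you invoke the identity $\min\{a,b\} = \tfrac12(a+b-|a-b|)$ directly, whereas the paper effectively re-derives it by splitting $\p$ into $\p_L=\{p:w^v_p>w^*_p\}$ and $\p_H=\{p:w^v_p\le w^*_p\}$ and averaging the two resulting expressions.
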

\begin{proof}
As discussed above, we know that $|S_v \cap S^*|= \sum_{p \in \p} \min \{ w^v_p,w^*_p\}$. Partition $\p$ into two sets $\p_L$ and $\p_H$, where $\p_L \triangleq \{p \in \p: w^v_p > w^*_p\}$ is the set of projects for which the amount allocated in the outcome is less than that in the voter $v$'s preferred allocation, and $\p_H \triangleq \{p \in \p: w^v_p \leq w^*_p\}$ is the set of projects where the amount allocated in the outcome is at least as much as that in the voter $v$'s preferred allocation. Then we have $|S_v \cap S^*| = \sum_{p \in \p_L}w^*_p + \sum_{p \in \p_H}w^v_p$.
 First we note that $\sum_{p \in \p_L}w^*_p + \sum_{p \in \p_H} w^v_p = \sum_{p \in \p}w^*_p - \sum_{p \in \p_H}(w^*_p - w^v_p) = B - \sum_{p \in \p_H}(w^*_p - w^v_p)$. Similarly, $\sum_{p \in \p_L}w^*_p + \sum_{p \in \p_H}w^v_p =  - \sum_{p \in \p_L}(w^v_p - w^*_p) + \sum_{p \in \p}w^v_p = - \sum_{p \in \p_L}(w^v_p - w^*_p) + B$.
 Therefore $|S_v \cap S^*| = \frac{1}{2}(2B - \sum_{p \in \p_L}(w^v_p - w^*_p) - \sum_{p \in \p_H}(w^*_p - w^v_p) = B - \frac{1}{2} \sum_{p \in \p} |w^v_p - w^*_p|$.
\end{proof}

\begin{proof}[of Theorem \ref{thm:welfmax}]
The Knapsack Voting rule outputs a set $S^*$ that maximizes $\sum_{j \in S} \score(j)$ among all consistent sets $S$ that satisfy the budget constraint.

If voters vote truthfully, then 
\begin{align*}
\sum_{j \in S^*} \score(j) = \sum_{j \in S^*} \sum_{ v \in V }\indicator (j \in S_v) = \sum_{v \in V} \sum_{j \in S^*} \indicator (j \in S_v) = \sum_{v \in V} |S_v \cap S^*|,
\end{align*}
and by Lemma \ref{lem:ell1overlap}, the proof follows.
\end{proof}

In effect, by way of Lemma \ref{lem:ell1overlap}, the equivalence of $\ell_1$ costs and Overlap utilities model, we have established the following:
\begin{corollary}\label{cor:intvotestratoverlap}
Knapsack Voting is strategy-proof and welfare-maximizing under either $\ell_1$ costs or Overlap utilities.
\end{corollary}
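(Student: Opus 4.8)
The plan is to observe that this corollary requires nothing new: it is obtained by assembling the theorems and lemma already proved in this section, in both the $\ell_1$ cost and Overlap utility versions.

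First, for the strategy-proofness half. Under the $\ell_1$ cost model this is precisely Theorem \ref{thm:EMDstratproof}, which shows that submitting one's true preferred allocation $S_v$ is a weakly dominant strategy regardless of the (possibly strategic) votes of the others. Under the Overlap utility model it is Corollary \ref{cor:EMDstratproofoverlap}, which follows from the same swapping argument with the minor modification indicated there. So both instances of the strategy-proofness claim are already in hand.

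Second, for welfare-maximization. Under $\ell_1$ costs this is exactly Theorem \ref{thm:welfmax}. To carry it over to the Overlap utility model I would invoke Lemma \ref{lem:ell1overlap}: for any outcome $\{w^*_p\}_{p\in\p}$ and any voter $v$, $|S_v \cap S^*| = B - \frac{1}{2}\sum_{p \in \p}|w^v_p - w^*_p|$. Summing over $v \in \V$ gives $\sum_{v \in \V}|S_v \cap S^*| = |\V|\,B - \frac{1}{2}\sum_{v \in \V}\sum_{p \in \p}|w^v_p - w^*_p|$, so a set maximizing the aggregate Overlap utility is exactly a set minimizing the aggregate $\ell_1$ cost. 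The chain of equalities in the proof of Theorem \ref{thm:welfmax} already shows that, under truthful voting, the Knapsack outcome $S^*$ maximizes $\sum_{v \in \V}|S_v \cap S^*|$; hence the same $S^*$ simultaneously maximizes total Overlap utility and minimizes total $\ell_1$ social cost.

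There is essentially no obstacle: the only point to check is that the affine relation in Lemma \ref{lem:ell1overlap} has the same shape for every voter — the additive term $B$ is the common budget and the factor $-\frac{1}{2}$ is voter-independent — so optimizing the aggregate of one objective coincides with optimizing the aggregate of the other, which is what makes the welfare statement model-independent in this budgeted setting. Combining the two halves yields the corollary as stated.
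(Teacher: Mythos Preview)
Your proposal is correct and takes essentially the same approach as the paper: the corollary is simply assembled from Theorem~\ref{thm:EMDstratproof}, Corollary~\ref{cor:EMDstratproofoverlap}, and Theorem~\ref{thm:welfmax}, with Lemma~\ref{lem:ell1overlap} supplying the affine equivalence that carries welfare-maximization between the two models. The paper states only that the corollary follows ``by way of Lemma~\ref{lem:ell1overlap}, the equivalence of $\ell_1$ costs and Overlap utilities,'' which is exactly the glue you have spelled out in more detail.
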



We would like take a moment here to discuss Knapsack Voting in terms of the conditions in Arrow's Theorem. Choosing a set of projects that fit the budget can be thought of as a partial ordering $\prec$ between those within, and those without. In our case, if the preferences of voters between two projects, say $a$ and $b$ with respect to this partial ordering remains the same, then the final score of $a$ and $b$ remains the same. If the outcome is $a \prec b$, then changing the other preferences cannot result in $b$ being chosen over $a$, i.e., change from $a \prec b$ to $b \prec a$. A similar property has been characterized as a weaker form of the Independence of Irrelevant Alternatives in the context of Approval Voting \citep{campbell2000weak}.

 We must also mention here that under the $\ell_1$ cost model, Knapsack Voting is \textbf{not group strategy-proof}. For example, consider the following:
\begin{example}
There are 5 projects $\{a,b,c,d,e\}$, 4 voters $\{1,2,3,4\}$, and a budget of \$2. The true preferences are: Voters 1 and 2 allocate \$2 to $a$, and 3 allocates \$1 to $b$ and \$1 to $c$, and 4 allocates \$1 to $d$ and \$1 to $e$. 
If they voted truthfully, the outcome is \$2 to $a$, giving a utility of $0$ to both 3 and 4. Assuming ties are broken in the order $b,d,c,e,a$. Then if voters 3 and 4 both voted for \$1 to $b$, and \$1 to $d$, then the outcome changes to $b$ and $d$, giving both voters a utility of 1.
\end{example}

\subsection{An extension to scenarios with revenues, deficits or surpluses}\label{subsec:rsd}
A similar result can be obtained in a case where there is no hard budget, and there are both expenditure terms, and revenue terms by extending the framework of approval voting under dichotomous preferences. Such considerations are common  in real settings that are a little beyond PB, e.g. designing mechanisms for determining the federal budget. There has been some work as a follow-up to ours, \citep{garg2017collaborative} with an overlapping set of co-authors, which uses models with revenue and spending as categories for the design and testing of new adaptive voting mechanisms. 

For our purposes, a voter proposes both how to generate revenue from among various avenues in $\mathcal{R}$, and how to spend it on various projects in $\p$. We will now discuss a case where the budget is balanced. Extensions to cases where there are multiple revenue items, or where the budget is unbalanced, are easy to see. 

As before, in the per-dollar sense, let $\pp$ be the set of per-dollar sub-projects for expenditure, and $\rr$ be the set of per-dollar sub-projects for revenue. We use a similar $\ell_1$ cost metric for the project terms as for the budgeted case, with revenue included. Formally, we will assume that if $\{x^*_r\}_{r \in \mathcal{R}}$ is the outcome revenue level, and the voter $v$ prefers $\{x^v_r\}_{r \in \mathcal{R}}$, then her dis-utility from the revenue term will be $\sum_{r \in \mathcal{R}} |x^*_r - x^v_r|$, and her total dis-utility will be $\sum_{p \in \p} |w^v_p - w^*_p| + \sum_{r \in \mathcal{R}} |x^*_r - x^v_r|$. Because we are considering \emph{balanced budgets}, we will assume $\sum_{r \in \mathcal{R}} x^v_r = \sum_{p \in \mathcal{P}} w^v_p$ for all voters $v$.

Let $R_v \subseteq \rr$ and $S_v \subseteq \pp$ denote the vote of voter $v$ satisfying $|R_v| = |S_v|$. For $j \in \rr$ we define $\score(j) \triangleq - |\{v \in \V: j \notin R_v\}|$. The Knapsack Vote outcome $R^* \subseteq \rr$ and $S^* \subseteq \pp$ is defined as follows:
\begin{align}\label{eqn:knapsackbudgbal}
(R^*,S^*) = {\arg\max}_{(R,S) : |S| = |R|} \left( \sum_{i \in S} \score(i) + \sum_{j \in R} \score(j) \right)
\end{align}

  In this setting, revenue corresponds to voters paying the government in fees/taxes, and hence we make the score corresponding to revenue terms negative. We will need a consistent tie-breaking rule here as well. But since the size of the outcome is not fixed, we can just break ties in favor of the bigger (or smaller) sets. We state a similar result as in Theorem \ref{thm:EMDstratproof} for this case:
 
 \begin{theorem}\label{thm:budgbal}
 With a balanced budget, Knapsack Voting is strategy-proof under the $\ell_1$ cost model. 
 \end{theorem}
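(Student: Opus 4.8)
The plan is to mimic the exchange argument used in the proof of Theorem \ref{thm:EMDstratproof}, now carried out simultaneously on the expenditure side (the sub-projects in $\pp$) and the revenue side (the sub-projects in $\rr$), and to verify that the $\ell_1$--to--overlap translation of Lemma \ref{lem:ell1overlap} still applies to each side separately. First I would fix a voter $v$ and freeze everyone else's votes, writing $\score_{-v}(\cdot)$ for the induced scores on $\pp \cup \rr$; note that on the revenue side $\score_{-v}(j) = -|\{u \in \V\setminus\{v\}: j \notin R_u\}|$, so casting a vote for $j \in \rr$ raises $\score(j)$ by $1$ just as on the expenditure side, and the objective in Equation \ref{eqn:knapsackbudgbal} is linear in $v$'s contribution. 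Then, starting from an arbitrary best response $(T_v, U_v)$ with $|T_v| = |U_v|$ that differs from the truthful $(R_v, S_v)$, I would pick a ``topmost'' mismatched sub-project $j$ that $v$ has included but shouldn't have — choosing it consistently, i.e. with no higher-indexed sub-project of the same project also in the vote — and a ``lowest'' sub-project $k$ that $v$ should have included but didn't, on whichever of the two sides $j$ lies. Because both the truthful vote and the best response satisfy the same cardinality constraint $|R_v|=|S_v|=\sum_p w^v_p$ and are consistent per-dollar sets on each side, such a $k$ exists on that same side, and swapping $j$ for $k$ preserves feasibility.

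The heart of the argument is then identical to before: the swap changes exactly two scores, $\score(j) \mapsto \score(j)-1$ and $\score(k)\mapsto \score(k)+1$, so for any consistent tie-breaking order the winning set $(R^*, S^*)$ changes by at most one element, and any such change must either move $j$ out of the winning set or move $k$ in. In the first case the element replacing $j$ can only weakly increase $v$'s overlap (it is the indicator that the new element lies in $v$'s true vote, minus the indicator that $j$ does, and $j$ was not in $v$'s true vote); in the second case, $k$ enters $v$'s favor while whatever leaves costs at most one unit of overlap. Summing the per-dollar overlaps over expenditure and revenue sub-projects and invoking Lemma \ref{lem:ell1overlap} on each side (this is where the balanced-budget assumption $\sum_r x^v_r = \sum_p w^v_p$ is used, so that the overlap-to-$\ell_1$ constant is well defined for both halves), the total disutility $\sum_p |w^v_p - w^*_p| + \sum_r |x^*_r - x^v_r|$ does not increase under the swap. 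Iterating the swap until the vote equals $(R_v, S_v)$ shows truthful reporting weakly dominates.

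The main obstacle I anticipate is bookkeeping around the coupling of the two sides through the single cardinality constraint $|R|=|S|$. One has to be careful that the exchange step stays within the feasible region: when the mismatched sub-project $j$ lies on the revenue side, the compensating $k$ must also be chosen on the revenue side (and symmetrically for expenditure), so that $|T_v|=|U_v|$ is maintained throughout; a swap across the two sides would break feasibility. Relatedly, one must check that the ``at most one element changes'' claim survives the tie-breaking subtlety here — unlike the fixed-budget case, the outcome size is not fixed, but the paper's remark that ties can be broken in favor of larger (or smaller) sets fixes a consistent order, and a single unit change in a single score can still only move one sub-project across the winning frontier. Once these feasibility and tie-breaking points are nailed down, the utility accounting is exactly as in Theorem \ref{thm:EMDstratproof} and Corollary \ref{cor:EMDstratproofoverlap}.
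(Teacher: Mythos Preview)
Your exchange argument is the right idea and matches the paper's approach once the best response and the truthful vote have the \emph{same total size}, but there is a genuine gap: in the balanced-budget setting the only constraint on a vote is $|R|=|S|$, not that this common value equal any fixed number. A best response $(T_v,U_v)$ can therefore have $|T_v|=|U_v| \neq |R_v|=|S_v|$. Your sentence ``both the truthful vote and the best response satisfy the same cardinality constraint $|R_v|=|S_v|=\sum_p w^v_p$'' conflates the balance constraint (which both votes satisfy) with equality of sizes (which they need not). When, say, $R_v \subsetneq T_v$ on the revenue side, there is a $j \in T_v \setminus R_v$ but no $k \in R_v \setminus T_v$ on that side, so your same-side swap cannot be executed; and you have explicitly (and correctly) ruled out cross-side swaps as infeasible. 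Thus the iteration cannot even begin in the unequal-size case.

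The paper's proof handles exactly this: when $|T_v|=|Q_v| < |S_v|=|R_v|$ it \emph{adds} one truthful sub-project to each side simultaneously (one $j \in S_v\setminus T_v$ on expenditure and one $k \in R_v\setminus Q_v$ on revenue), so both scores increase by $1$, balance is preserved, and the outcome can only grow; the utility change is then checked to be nonnegative via the overlap accounting. This is repeated until the sizes match, and only then does the argument reduce to the same-side swap of Theorem \ref{thm:EMDstratproof}. To repair your plan you need precisely this size-adjustment phase (and its mirror for $|T_v|>|R_v|$, removing one from each side); once sizes agree, your same-side swap argument goes through as you describe.
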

 We defer the proof to the appendix (\ref{proof:budgbal}), as it is similar to the proof of Theorem \ref{thm:EMDstratproof}, and not crucial to the exposition here.

We can also extend this to \emph{unbalanced} settings by including the budget deficit (expenditure - revenue) as part of the voters' preferences. If $\Delta_v = |S_v| - |R_v|$ is the preferred deficit of voter $v$, and the deficit in the outcome is $\Delta^*$, then the voter incurs an additional dis-utility of $| \Delta^* - \Delta_v|$, i.e., how much the deficit in the outcome differs from her preferred level. As such, it is equivalent to adding an additional revenue term.

 \subsection{``Partial" strategy-proofness under additive concave utilities}\label{subsec:parttrue}
In the case with additive concave utilities, we illustrate a weaker, yet interesting property which we call ``partial" strategy-proofness. Consider a focal voter $i$ responding to the votes of all others. Assume that she has full knowledge about the how the others voted in aggregate. If $S_{-i}$ denotes the cumulative votes of all voters except $i$, she knows  
$\W_{-i} \subseteq \pp$, the winning set as determined by $S_{-i}$. Let
$\W(S_i,S_{-i})$ denote the set of winners if her vote $S_i$ is added. A \emph{best response} for $i$ is a consistent set
\begin{displaymath}
S_i^{\star} \in \underset{S_i \subseteq \pp : |S_i|=B}{\arg \max} \sum_{j \in \W(S_i,S_{-i})} \vp_{i,j},
\end{displaymath}

where $\vp_{i,j}$ is the utility of voter $i$ for sub-project $j \in \p$. As mentioned in Definition \ref{def:addconcave}, the utility from per-dollar sub-projects is monotone non-increasing (as the utility from a project is concave in the amount allocated), i.e., if $x<y$ and $a=D^p_x$ \footnote{$D^p_x$ is the $x$-th dollar subproject in project $p$.} and $b=D^p_y$ for some $p \in \p$, then $\vp_{i,a} \geq \vp_{i,b}$. With respect to voter $i$, we say a candidate $k$ \emph{dominates} $j$ if and only if 
\begin{displaymath}
 k \in \W_{-i} \mbox{ and } \vp_{i,k} \geq \vp_{i,j}.
\end{displaymath}

A consequence of the budget constraint is that it allows for \emph{partial strategy-proofness} in the best response of a focal voter responding to all other votes. 
\begin{theorem}[Partial Strategy-proofness]\label{thm:parttrue}
 Under Knapsack voting, there exists a best response $S_i^\star$ such that if $k$ \emph{dominates} $j$, and $j \in S_i^\star$, then $k \in S_i^\star$. 
\end{theorem}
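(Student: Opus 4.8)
The plan is to prove this by an exchange argument, in the same spirit as the proof of Theorem~\ref{thm:EMDstratproof}, starting from a carefully chosen best response. The first ingredient I would establish is a \emph{persistence} sublemma: if $k \in \W_{-i}$ and $k \in S_i$ for some consistent $S_i$ with $|S_i| = B$, then $k \in \W(S_i,S_{-i})$. Writing $\score = \score_{-i} + \indicator[\,\cdot \in S_i\,]$, any sub-project $m \neq k$ that outranks $k$ under $\score$ (i.e.\ $\score(m) > \score(k)$, or $\score(m)=\score(k)$ with $m \prec k$) must already outrank $k$ under $\score_{-i}$, because $\score(k) = \score_{-i}(k)+1$ while $\score(m) \le \score_{-i}(m)+1$; so if $k$ lay outside the top $B$ under $\score$ it would also lie outside the top $B$ under $\score_{-i}$, contradicting $k \in \W_{-i}$. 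Intuitively: voting for an already-winning sub-project is ``safe''.

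Next, among all best responses pick $S_i^\star$ maximizing $\sum_{m \in S_i^\star}\vp_{i,m}$ (breaking any remaining ties by a secondary monovariant, see below), and suppose for contradiction that the claimed property fails. Then there is a pair with $k$ dominating $j$, $j \in S_i^\star$, $k \notin S_i^\star$. Using consistency of $S_i^\star$ and of $\W_{-i}$ together with the monotonicity $\vp_{i,D^p_x} \ge \vp_{i,D^p_y}$ for $x<y$, I would reduce to a \emph{canonical} witness: replace $j$ by the topmost sub-project of its project in $S_i^\star$ (which only lowers $\vp_{i,j}$), and replace $k$ by the lowest sub-project of its project not in $S_i^\star$ (which only raises $\vp_{i,k}$ and keeps $k\in\W_{-i}$, since $\W_{-i}$ is consistent); we still have $\vp_{i,k}\ge\vp_{i,j}$, and if $j,k$ happen to lie in the same project then all intervening per-dollar utilities collapse to a single value, a degenerate case I would dispatch separately. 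Assume now $j,k$ are in distinct projects and form $S_i^\prime := (S_i^\star\setminus\{j\})\cup\{k\}$, which is consistent with $|S_i^\prime|=B$.

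The core step is showing $S_i^\prime$ is again a best response. Changing the vote from $S_i^\star$ to $S_i^\prime$ perturbs only two scores ($\score(j)$ down by one, $\score(k)$ up by one), so, exactly as argued in the proof of Theorem~\ref{thm:EMDstratproof}, the winning set changes by at most a single swap: it is unchanged, or equals $\W^\star\setminus\{j\}\cup\{k\}$, or $\W^\star\setminus\{j\}\cup\{x\}$ for the highest-priority non-winner $x$, or $\W^\star\setminus\{y\}\cup\{k\}$ for the lowest-priority winner $y$ (here $\W^\star := \W(S_i^\star,S_{-i})$). Applying the persistence sublemma to $S_i^\prime$ forces $k \in \W(S_i^\prime,S_{-i})$, which rules out the third possibility (and forces $k\in\W^\star$ in the ``unchanged'' case); in the surviving possibilities the welfare change is $0$ or $\vp_{i,k}-\vp_{i,j}\ge0$. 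Hence $S_i^\prime$ is a best response, and since $\sum_{m\in S_i^\prime}\vp_{i,m} - \sum_{m\in S_i^\star}\vp_{i,m} = \vp_{i,k}-\vp_{i,j}$, maximality of $S_i^\star$ forces $\vp_{i,k}=\vp_{i,j}$; in this equality case the swap strictly improves a suitably chosen secondary monovariant (e.g.\ a ``distance to canonical form'' measuring how far $S_i^\star$'s chains are from the dominating sub-projects of $\W_{-i}$), so iterating terminates at a best response with the stated property.

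The step I expect to be the main obstacle is the welfare bookkeeping in the fourth case above, namely ruling out that adding $k$ to the vote drags $k$ into the winning set by \emph{evicting a strictly higher-utility last winner} $y$ (which would make $S_i^\prime$ strictly worse and break the argument). This is precisely where the tie-break order $\prec$, the persistence sublemma, and the freedom in choosing \emph{which} violating pair to repair all have to be used together; a naively chosen swap can fail, so one must either argue the bad eviction never happens for the canonical witness, or choose $j$ more cleverly (e.g.\ preferring a $j \notin \W^\star$ whose removal is inert) to make the exchange welfare-safe. Pinning down the secondary monovariant so that the equality-case iteration provably terminates is the remaining piece of care.
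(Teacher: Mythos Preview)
Your direct exchange $j \mapsto k$ together with the maximizing-monovariant selection of $S_i^\star$ takes a genuinely different route from the paper. The paper never removes the dominated element itself. It proceeds in two stages: first, for any $p \in S_i^\star$, it argues that the entire dominator set $\Lambda_{i,p} = \{k \in \W_{-i}: \vp_{i,k} > \vp_{i,p}\}$ already lies in the \emph{current} winning set $\W^\star := \W(S_i^\star,S_{-i})$ (not merely in $\W_{-i}$). Second, since $|S_i^\star| = |\W^\star| = B$ and $\Lambda_{i,p} \subseteq \W^\star$, a counting argument gives $|S_i^\star \setminus \W^\star| \ge |\Lambda_{i,p} \setminus S_i^\star|$; the paper then deletes that many ``wasted votes'' from $S_i^\star \setminus \W^\star$ and inserts $\Lambda_{i,p} \setminus S_i^\star$ in their place. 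Because only votes for non-winners are removed and only votes for current winners are added, the winning set is unchanged, so the modified vote is automatically a best response and now contains all of $\Lambda_{i,p}$. Iterating over the remaining violations (and observing that each replacement touches only non-winners, hence cannot undo earlier inclusions) finishes the argument without any monovariant bookkeeping.

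This wasted-vote replacement is precisely what neutralizes your ``fourth case'': since $\W^\star$ never moves, no high-utility winner $y$ can be evicted. Your tentative fix --- prefer a $j \notin \W^\star$ whose removal is inert --- points in the right direction, but the substantive gap in your plan is that you have not shown such wasted votes exist in sufficient number; that requires the counting step, which in turn presupposes $\Lambda_{i,p} \subseteq \W^\star$, a fact your persistence sublemma does not deliver (persistence needs $k \in S_i$, whereas the missing dominator has $k \notin S_i^\star$). Without these two ingredients your case-4 obstacle is real, and neither the extremal choice of $S_i^\star$ nor an unspecified secondary monovariant resolves it.
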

The proof is not critical to the exposition here, and we defer it to the Appendix.

We can think of $\W_{-i}$ as representing the candidates that $i$ thinks are popular. And projects with a higher ``value-for-money" are preferred from her perspective. The theorem states that a simple way for a voter to act in her best interest is to vote for projects that to her are both \textbf{popular} and \textbf{favorite}. This notion is similar in spirit to the ideal of \emph{sincerity} in approval voting \citep{brams2007approval,niemi1984problem}. 

The reason partial strategy-proofness holds under {Knapsack Voting} is that the voters face the same constraints as the outcome they are collectively deciding. It is interesting to note that this property does not hold under {K-approval} voting, as there is a mismatch between the constraints on the voters (choosing a fixed number of projects $k$) and those on the outcome (a fixed budget $B$). We illustrate this with a small example.

\begin{example}\label{eg:partstratapp}
Consider $5$ projects $a,b,c,d,e$ of costs \$200, \$100, \$100, \$100, \$200, and using the 2-approval rule. The budget is \$400 and they poll $100,50,50,50,20$ votes respectively without counting in $i$'s vote. The tie is broken in the order $e,d,c,b,a$, and in this case $d$ wins over $b,c$.
Let's say that $i$'s utilities for these are $500,100,150,200,500$ respectively. Based on the above, $i$'s best response is to not vote for $a$ but for $c,d$. 
\end{example}


\subsection{Maximum Likelihood Interpretation}\label{sec:knapsackml}
One way of looking at voting rules is as follows: There exists a ``ground truth" outcome, and each voter has a noisy perception of it. And the the voting rule is the Maximum likelihood estimator of the ``ground truth" given any realization of votes. The {Knapsack} voting rule can be interpreted in this way since it belongs to a family of voting rules known as scoring rules \citep{conitzer2012common}. Selecting subsets of winners based on the MLE approach has received some attention lately \citep{procaccia2012maximum} wherein voting rules to select subsets based on their performance on various metrics with respect to noisy comparisons or rankings drawn from a Mallows model.

 In this section, we will use the per-dollar approach and explicitly construct a natural noise model for the votes for which it is the Maximum Likelihood estimator. The noise model we construct is similar to the Mallows model, but defines a distribution over subsets directly as opposed to rankings.

\begin{definition}[Noisy Knapsack Vote Model]
There is ``ground truth" set $S^\star \in \pp$, which satisfies $ |S^\star| = B $. Each vote $S_i$ is drawn i.i.d. according to a distribution that is given by:
\begin{eqnarray*}
   &\Pr \{ S_i | S^\star\}  \propto \exp ( |S^\star \cap S_i|), & \mbox{ if } |S_i| \leq B \\
    &\Pr \{ S_i | S^\star\} = 0,               & \text{otherwise}
\end{eqnarray*}
\end{definition}

The quantity $|S^\star \cap S_i|$ is equal to the number of dollars in the allocation given by $S^\star$ that agrees to that given by voter $i$ in her vote $S_i$, and by Lemma \ref{lem:ell1overlap} is related to the Overlap/$\ell_1$ utility we discussed previously.

By taking the logarithm of the probabilities $\Pr(S_i|S^\star)$, it is easy to see that the Maximum Likelihood estimate of the ``ground truth" $S^\star$ given all the votes, is the set $S \subseteq \pp$ satisfying $|S| = B$ that maximizes  
\begin{equation}\label{eqn:knapsackml}
\frac{1}{|V|}\sum_{i \in \V} |S \cap S_i|.
\end{equation}

\begin{theorem}
The \emph{Knapsack rule} returns the maximum likelihood estimate under the Noisy Knapsack Vote Model.
\end{theorem}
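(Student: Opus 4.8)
The plan is to write the likelihood of the observed vote profile as a function of a candidate ground truth $T$, strip off the factors that do not depend on $T$, and recognize the surviving objective as the one the Knapsack rule optimizes. Fix the observed votes $(S_i)_{i\in\V}$; any $S_i$ with $|S_i|>B$ has probability zero under the model, so we may assume $|S_i|\le B$ for every $i$ (otherwise the likelihood is identically zero and there is nothing to prove). For a candidate $T\subseteq\pp$ with $|T|=B$, set $Z(T)\triangleq\sum_{S\subseteq\pp:\,|S|\le B}\exp(|T\cap S|)$, so that $\Pr\{S_i\mid T\}=\exp(|T\cap S_i|)/Z(T)$. Since the votes are drawn i.i.d., the log-likelihood of the profile equals $\sum_{i\in\V}|T\cap S_i|-|\V|\log Z(T)$.

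The only substantive step is to check that $Z(T)$ does not depend on the particular $T$, only on $|T|=B$. For this I would take an arbitrary bijection $\pi$ of $\pp$ sending $T$ to another $B$-subset $T'$; since $\pi$ also permutes the family of all subsets of $\pp$ of size at most $B$, the change of variable $S\mapsto\pi^{-1}(S)$ in the sum defining $Z(T')$ yields $Z(T')=\sum_{S}\exp(|T'\cap S|)=\sum_{S}\exp(|T\cap\pi^{-1}(S)|)=Z(T)$. Hence $\log Z(T)$ is a constant across all admissible ground truths, and maximizing the log-likelihood over $\{T:|T|=B\}$ is the same as maximizing $\sum_{i\in\V}|T\cap S_i|$ over that set.

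Finally I would exchange the order of summation: $\sum_{i\in\V}|T\cap S_i|=\sum_{i\in\V}\sum_{j\in T}\indicator(j\in S_i)=\sum_{j\in T}\score(j)$, which up to the harmless factor $\frac{1}{|\V|}$ is exactly the objective in Equation \ref{eqn:knapsackml} and exactly the objective in Definition \ref{def:fracknapsack}. Consequently the set returned by the Knapsack rule -- the maximizer of $\sum_{j\in S}\score(j)$ subject to $|S|=B$, with ties resolved by the fixed consistent tie-breaking order -- is a maximum likelihood estimate of $S^\star$, as claimed. I do not expect a genuine obstacle here; the one point requiring a little care is the $T$-independence of the normalizing constant $Z(T)$, and that reduces to the one-line symmetry computation above.
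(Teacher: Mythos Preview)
Your proposal is correct and follows essentially the same route as the paper: reduce the log-likelihood to $\sum_{i\in\V}|T\cap S_i|$ and then swap the order of summation to recognize $\sum_{j\in T}\score(j)$. The one place you go beyond the paper is in explicitly verifying, via the bijection argument, that the normalizing constant $Z(T)$ depends only on $|T|$; the paper simply asserts (just before the theorem) that ``it is easy to see'' the MLE maximizes the overlap sum, leaving that point implicit.
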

\begin{proof}
Let's rewrite the quantity in Equation \ref{eqn:knapsackml} as 
\begin{align*}
&\sum_{i \in \V} |S \cap S_i|  = \sum_{i \in \V} \sum_{j \in S} \indicator(j \in S_i)
= \sum_{j \in S}\sum_{i \in \V}  \indicator(j \in S_i)  = \sum_{j \in S} \score(j),
\end{align*}
where $\indicator(.)$ is a $\{0,1\}$ variable that takes on a value $1$ when the statement in the argument is true, and $0$ otherwise. This quantity is maximized by picking $B$ candidates from $\pp$ that have the highest score (see Definition \ref{def:fracknapsack}), which is essentially what the Knapsack rule chooses.\qed
\end{proof}

\section{Voting based on \emph{value-for-money}}\label{sec:vfmcomps}
We have seen that Knapsack Voting  has many advantages with respect to its strategic properties, and implementation in Participatory Budgeting elections. We now turn to another way of eliciting voters' preferences in this setting - \emph{value-for-money} comparisons. Given a single agent's Knapsack Problem, one way of computing the optimal solution is to order the items according to their value-to-size ratio and pick the higher ranked ones in order till the knapsack capacity is used up. This order can be ascertained by comparing pairs of projects according to their value-to-size ratio. Asking voters to compare/rank projects based on their \emph{value-for-money} is a natural analog (see Figure \ref{fig:bftb}) of this idea in a setting where multiple agents together have to decide an outcome.

Unlike Knapsack Voting, aggregation schemes based on value-for-money comparisons cannot be guaranteed to have good strategic properties. Further, these schemes are not as transparent as Knapsack Voting, since different information is elicited from each voter, and the aggregation method is not as straightforward as that of Knapsack Voting. Value-for-money schemes also do not extend naturally to settings with revenue, deficits and surpluses. 

Despite these difficulties, they are useful in practice because they can be used to design paper ballots with Knapsack Voting, and also elicit the aggregate preferences of voters to make empirical observations from data. We will discuss these presently.

In addition, Value-for-money schemes have the following potential advantages:
\begin{itemize}
\item a smaller cognitive load on voters: especially with large ballots (see Section \ref{subsec:timing})
\item aggregation in cases where the budget is not known or fixed a priori (perhaps using Kemeny-Young like ranking rules)
\end{itemize}
We will delve into these in more detail in the next section.

\begin{figure}[ht]
\centering
  \includegraphics[width=170pt]{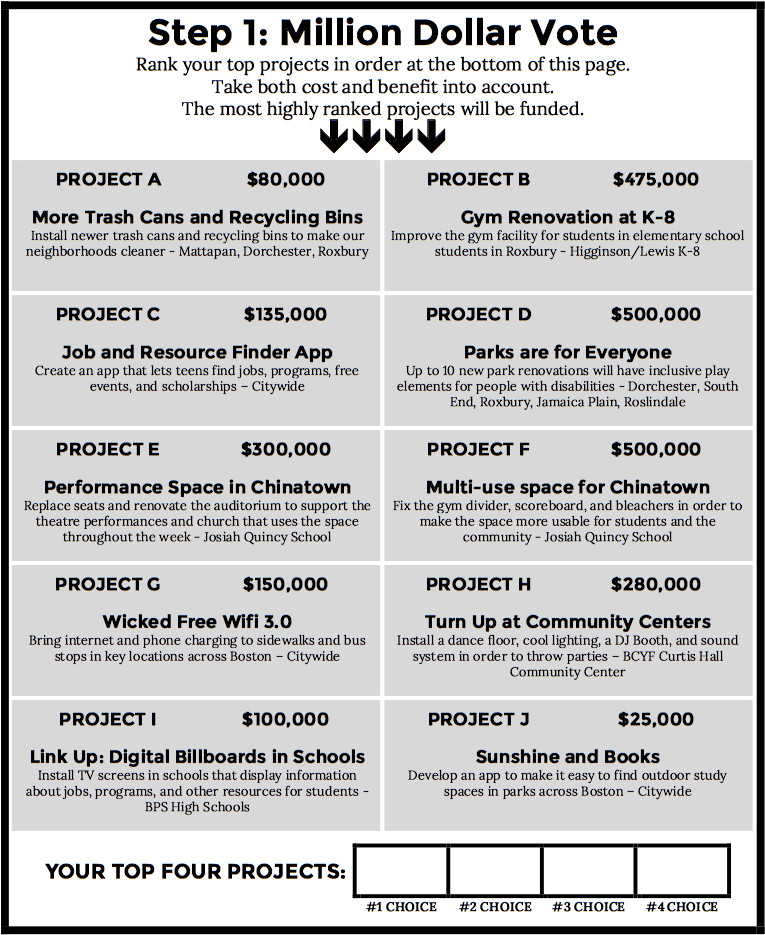}
  \caption{Value-for-money ranking paper ballot for Knapsack Voting - PB Boston 2016}
  \label{fig:paperballot}
\end{figure}
\subsection{Value-for-money rankings and paper ballots}\label{subsec:vfmrankings}
As mentioned before, some elections that employ digital voting also require a corresponding paper ballot. This was the case in the Youth Lead The Change 2016 PB election held in Boston, a set of projects was put to vote to youth between the ages 12 and 25. We implemented Knapsack Voting as the official ballot process in this election. Since it is tedious to do Knapsack Voting on paper, we designed a paper ballot that asked voters to rank their top 4 projects according to value-for-money (Figure \ref{fig:paperballot}). From a total of around 4000 voters, about 10\% used the digital Knapsack interface, and the rest used the paper ballot with ranking. 

Here we ask each voter to rank the projects on the ballot taking both costs and benefits into account \footnote{There is an implicit assumption here that taking both costs and benefits into account is the same as considering value-for-money}. Voters take into account costs and benefits into account, and ranking reinforces this consideration. 
Interpreting these votes as Knapsack votes, we observe that the outcome from the rankings was exactly the same as that of Knapsack Voting. This gives us a way of using rankings to solve the budgeting problem in practice.

\subsection{Value-for-money comparisons}\label{subsec:vfmcomps}
We can elicit fine-grained information about the preferences of voters between pairs of projects by doing value-for-money comparisons between randomly chosen projects.

\begin{definition}[Value-for-money comparison]  \label{def:vfmcomps}
For each pair of projects $\{j,k\}$ from $\p$ shown to her, voter $i$ chooses a winner 
$w_i(\{j,k\}) = {\arg \max}_{t \in \{j,k\}}\; \frac{v_{i,t}}{c_t}$.
\end{definition}

Doing these comparisons in practice, leads us to an interesting empirical observation of the structure in the aggregate preferences of voters. We observe that compiling this data from our experiments reveals a (nearly) transitive majority relation among the projects (Figure \ref{fig:cambridge}). By this we mean that there exists a Condorcet winner, and upon removal of that candidate, there again exists a Condorcet winner, and so on. 

\begin{figure}[ht]
\begin{minipage}{0.5\textwidth}
\centering
  \includegraphics[width=200pt]{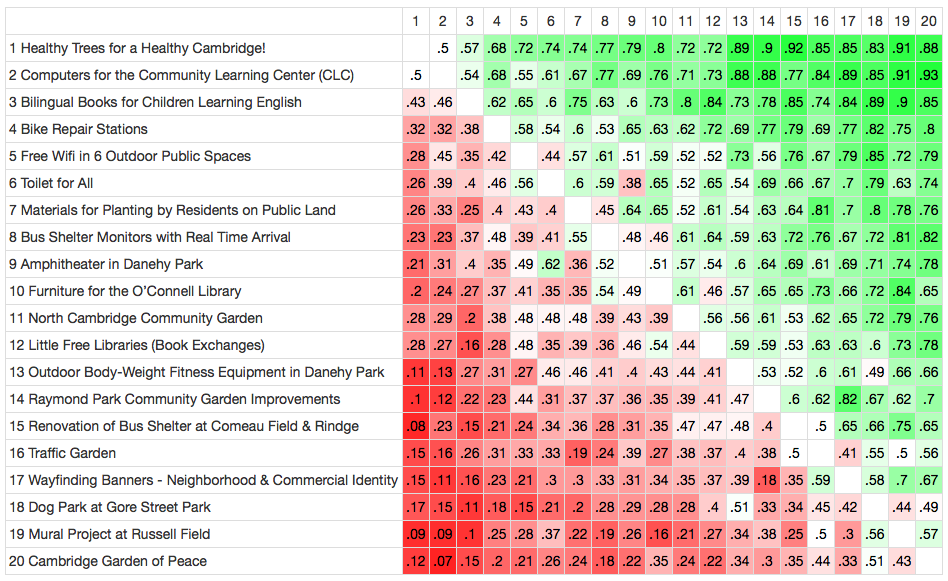}
  \caption{Cambridge 2014 - Comparison Table}
  \label{fig:cambridge}
\end{minipage}%
\begin{minipage}{0.5\textwidth}
\centering
  \includegraphics[width=200pt]{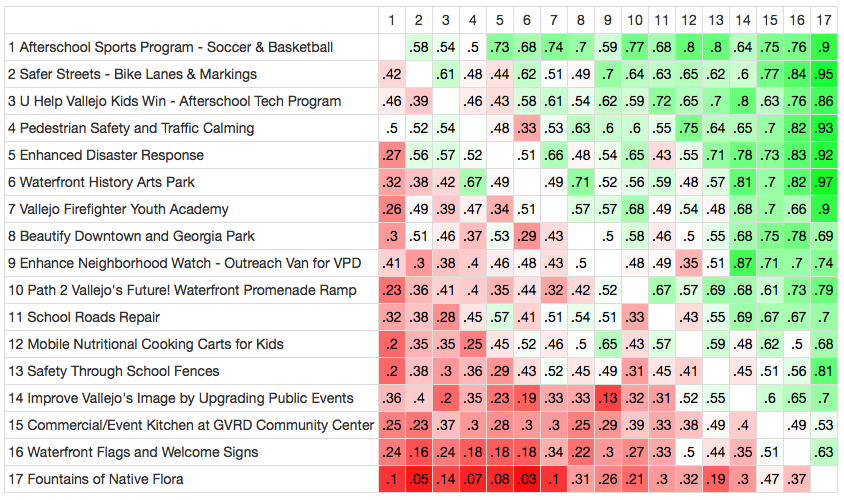}
  \caption{Vallejo 2015 - Comparison Table}
  \label{fig:vallejo}
\end{minipage}%
\end{figure}

In figures \ref{fig:cambridge} and \ref{fig:vallejo}, we show the aggregate strength of each comparison across the two elections -- Cambridge 2015 and Vallejo 2015. More details about the two elections are in the next section. The number in the cell denotes the fraction of comparisons where the row project beats the column project. Green represents a fraction greater than 0.5, and red represents a fraction smaller than 0.5. The darker the green(red), the closer it is to 1(0).

Given this structure, there is a clear order among projects based on the aggregate strength of the comparisons. In fact, any Condorcet rule leads to the same outcomes on such data. If this structure does not hold, then the aggregation of these comparisons becomes more complicated.

\subsubsection{Comparing outcomes of Knapsack and K-approval}\label{subsec:bordaset}
We will define a quantitative measure for any outcome that represents the level of its agreement with the pairwise comparisons.To do this, we generalize the Borda rule and define a Borda score for outcomes that are sets of projects.
 
In our experiments, we compare one project against another as shown in Figure \ref{fig:bftb}, and to each voter we present randomly chosen pairs so as to ascertain the aggregate preferences of voters. Given the results of these comparisons, let $n(j,k)$ denote the number of voters that chose $j$ over $k$.

We now build on the standard for a standard definition of Borda's rule \citep{young1974axiomatization}, and its classical interpretation as an MLE \citep{young1988condorcet}. Generalizing the Borda score for a single winner, we define
\begin{definition}[Set-Borda score] \label{def:setborda}
For any set of projects $S \subseteq \p$, given $C = \sum_{p \in S} c_p$ and $M = \sum_{p \in \p}c_p$, the  Set-Borda score of $S$ is given by $\frac{1}{C(M - C)} \sum_{j \in S} \sum_{k \notin S} c_j c_k \left( n(j,k) - n(k,j) \right)$
\end{definition}

We can interpret a pairwise value-for-money comparison between two projects $j$ and $k$, as a pairwise relation between a dollar sub-project of $j$ and a dollar sub-project of $k$. The Set Borda score of a set of $S$ corresponds to the average number of such dollar versus dollar comparisons that agree, minus those that disagree, with the partition induced by it. We can use this score, as a measure of social welfare that embodies the essence of the Borda rule, to empirically compare the outcome of Knapsack Voting with that of K-approval. 
 


\section{Our digital platform, and results from our experiments}\label{sec:experiments}
Our work with Participatory Budgeting began with a partnership with Chicago's 49th Ward to develop a digital voting system for their election. Since then we've have worked with almost a dozen cities/districts over the past couple of years (see \url{http://pbstanford.org}), but we will primarily look at data collected from the following elections:
\begin{itemize}
\item Boston 2016 - Youth Lead the Change
\item Boston 2015 - Youth Lead the Change
\item Cambridge 2015
\item Cambridge 2014
\item Vallejo 2015
\item New York City District 5 2015 (NYC5)
\item New York City District 8 2015 (NYC8)
\end{itemize} 
The total number of voters in each of these elections were 4176, 2600, 3273, 2194, 1834, 704, 271 respectively.

In the Boston 2016 election, we implemented Knapsack Vote as the official ballot mechanism. In the other elections, besides implementing a interactive interface for the official K-approval vote (Fig. \ref{fig:kapproval}), our goal has been to experiment with Knapsack voting and Value-for-money comparisons and use the data so collected to complement a theoretical understanding of these methods. After the K-approval vote used for the formal election, we ask the voters to participate in our experiments on either Knapsack Voting or Value-for-money comparisons or both. 
\begin{itemize}
\item In Cambridge 2015 and Boston 2015, we showed each voter either Knapsack and value-for-money experiments with a 50\% chance. 
\item In NYC5 and NYC8, we did only the Knapsack vote.
\item In Cambridge 2014 and Vallejo 2015 we did only pairwise comparisons.
\end{itemize}
 The Knapsack interface (Figure \ref{fig:knapsack} in Section \ref{subsec:overview}) has a live budget bar that shows how much of the budget has used up with the current selection. For the value-for-money comparisons, we present each voter with a fixed number of pairs of projects chosen uniformly at random without replacement, and ask them the following question (in the spirit of Definition \ref{def:vfmcomps}): ``Which of these projects gives a higher benefit to the community per dollar spent?" (see Figure \ref{fig:bftb}). We must mention that only a subset of the entire set of voters actually chose to take part in our experiments, and we report this percentage in Tables \ref{tab:time_knapsack} and \ref{tab:time_vfm}. This procedure was approved by Stanford University's Institutional Review Board.

\begin{figure}[ht!]
\begin{minipage}{0.47\textwidth}
\centering
  \includegraphics[width=180pt]{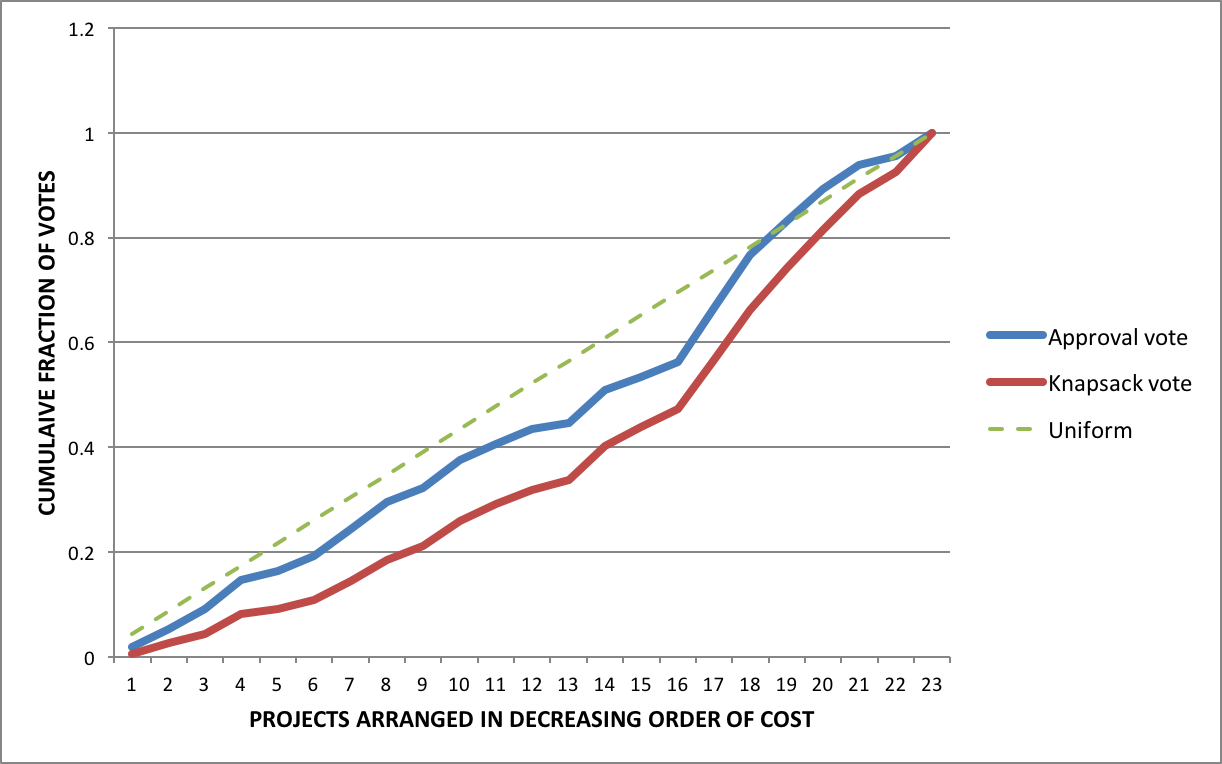}
  \caption{Cumulative fraction of total votes versus cost of projects: Cambridge 2015}
  \label{fig:cambridge_cumulative}
\end{minipage}%
\begin{minipage}{0.47\textwidth}
\centering
  \includegraphics[width=180pt]{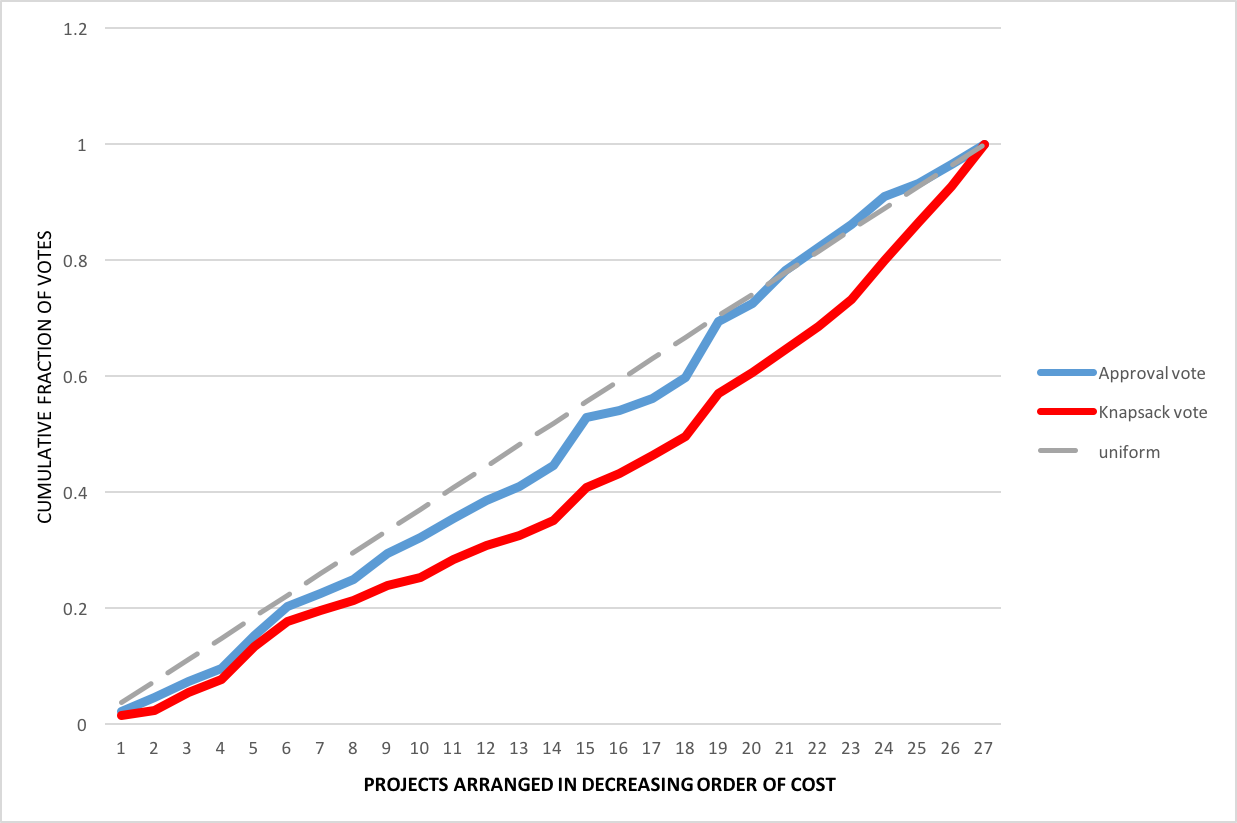}
  \caption{Cumulative fraction of total votes versus cost of projects: NYC District 8}
  \label{fig:nyc8_cumulative}
\end{minipage}
\end{figure}

\subsection{Cost consideration under Knapsack}\label{subsec:knapsackdata}
Our data suggests that there is a bias towards projects of larger costs in the K-approval method, as compared to the Knapsack method. We empirically verify this effect in two ways. First, we present data from Cambridge and NYC District 8, where a budget of \$600,000 and 6-approval, and \$1,000,000 and 5-approval, were used respectively. In Figures \ref{fig:cambridge_cumulative} and \ref{fig:nyc8_cumulative} we lay out the projects in descending order of cost, and plot the cumulative fraction of votes for projects above every cost threshold. We then compare K-approval and Knapsack against the uniform distribution. We see that this function for K-approval dominates that for 
Knapsack, which means that costlier projects are over-represented, thereby 
supporting our hypothesis. While this observation is not direct evidence that Knapsack Voting is better than K-approval, it does suggest a significant qualitative difference between the outcomes of the two methods.

\begin{table}[ht!]
\begin{minipage}{\textwidth}
	\centering
	    	\caption{Average cost of winning projects, as a fraction of the budget}
    		\label{tab:avgcost}
    	\begin{tabular}{ | r | r | r | }
    		\hline
      		& \emph{K-approval} & \emph{knapsack} \\ \hline
    		NYC District 5 &  0.18 & 0.14 \\ \hline
    		NYC District 8 & 0.20 & 0.12 \\ \hline
    		Boston 2015& 0.27 & 0.18 \\ \hline
    		Cambridge 2015 & 0.15 & 0.10 \\ \hline
    	\end{tabular}
	\end{minipage}
\end{table}

Second, we look at the average cost of the winning projects under each method. Table \ref{tab:avgcost} shows the average cost of the winning projects (normalized by the total budget) in each of those elections. On the average, across the three places, there is a reduction of about 30\% in the average cost of the winning projects. The above two observations clearly suggest that Knapsack Voting leads to voters' being more frugal while choosing which projects to vote for.

\subsection{Timing data}\label{subsec:timing}
The data from Tables \ref{tab:time_kapproval} and \ref{tab:time_knapsack} suggests that the Knapsack interface is not much more time consuming than the K-approval interface. Of course, since the Knapsack interface follows the official K-approval interface, and so the voters were familiar with the projects when they attempted the Knapsack vote. Even if the time taken by Knapsack were the sum of times for both in our experiments, we can see that it is very reasonable. This is borne out in the Boston 2016 election (see Table \ref{tab:time_knapsack}), where the number of projects was the same as in Boston 2015 and Knapsack was presented first as the official ballot.

Here are some more interesting observations on the ease of each voting method based on Tables \ref{tab:time_kapproval},\ref{tab:time_vfm}, \ref{tab:time_knapsack}: 
\begin{itemize}
\item we presented to each voter either Knapsack or Value-for-money comparisons with a 50\% chance in both Boston 2015 and Cambridge 2015, and the percentage of voters that completed the comparisons was at least a third greater than that of Knapsack;
\item in Boston 2015, the median time taken by the voters was $61$ seconds for K-approval vote (K=4 out of 10 projects on the ballot), $70$ seconds for Knapsack and $14$ seconds for value-for-money comparisons of 4 pairs. A similar trend is seen in the Cambridge experiment. 
\end{itemize}
All this suggests that Knapsack Vote and K-approval have comparable times. \emph{value-for-money} comparisons involve a smaller cognitive load, than the other voting methods, especially with a large number of projects as seen in Cambridge 2015.

\begin{table}[ht!]
\begin{minipage}{0.47\textwidth}
	\centering
	        \caption{Timing data for K-approval; N = number of projects on ballot, T = median completion time.}
	        \label{tab:time_kapproval}
    \begin{tabular}{ | r | r | r | r |}
    \hline
      City & K & N & T\\ \hline
  	Boston 2015 &  4 & 10 & 61s\\ \hline
    Cambridge 2015 & 6 & 23 & 213s\\ \hline
    \end{tabular}
\end{minipage} 
\hspace{10pt}
\begin{minipage}{0.47\textwidth}
	\centering
	    \caption{Timing data for value-for-money comparisons; n = number of comparisons, T = median completion time.}
    \label{tab:time_vfm}
    \begin{tabular}{ | r | r | r | r |}
    \hline
      City & n & T & \% of voters\\ \hline
  	Boston 2015&  4 & 14s & 10\\ \hline
    Cambridge 2015 & 4 & 53s & 40\\ \hline
    \end{tabular}
\end{minipage}
\end{table} 

\begin{table}[ht!]
	\centering
	     \caption{Timing data for Knapsack Voting; N = number of projects, T = median completion time.}
    \label{tab:time_knapsack}
    \begin{tabular}{ | r | r | r | r | r |}
    \hline
      City & Budget & N & T & \% of voters\\ \hline
  	Boston 2015 &  \$1,000,000 & 10 & 70s & 7\\ \hline
    Cambridge 2015 & \$600,000 & 23 & 115s & 30\\ \hline
    Boston 2016 &  \$1,000,000 & 10 & 86s & 100\\ \hline
    \end{tabular}
\end{table}

\subsection{Comparison of Knapsack and K-approval against value-for-money comparisons}\label{subsec:knapvskapp}
Using the data from random pairwise comparisons in the value-for-money experiment, and based on the Set Borda score (Definition \ref{def:setborda}) we calculate the average number of comparisons that agree and disagree with the winning sets as determined by the K-approval and Knapsack methods. 
\begin{table}[ht!]
	\centering
	    	\caption{Comparison of Knapsack and K-approval based on agreement with pairwise comparisons: Knapsack has a greater level of agreement.}
    		\label{tab:borda}
    	\begin{tabular}{ | r | r | r | r | r |r|}
    		\hline
    		& \multicolumn{2}{c|}{K-approval}   & \multicolumn{2}{c|}{Knapsack} \\ \hline
      		& \textbf{Agreement} & Standard error & \textbf{Agreement} & Standard error \\ \hline
    		Boston 2015&  \textbf{0.20} & 0.047 & \textbf{0.26} & 0.045\\ \hline
    		Cambridge 2015&  \textbf{0.30} & 0.022 & \textbf{0.40} & 0.021\\ \hline
    	\end{tabular}
\end{table}
If we picked a dollar allocated in the winning outcome, and a dollar not in it, both at random, the numbers in Table \ref{tab:borda} are a measure of the average fraction of votes that agree with the outcome minus those that disagree. We see that on this score, \emph{Knapsack} does better than the \emph{K-approval}, thereby indicating a higher level of agreement with the voters' preferences. We also note the standard error in the sample mean calculated above, assuming the sample mean is normally distributed. We see that the difference between the sample means of Knapsack and K-approval is greater than the standard error.


\section{Conclusions and ongoing work}
\emph{Knapsack voting} and Value-for-money comparisons are intuitive ways of eliciting voters' preferences for budgetary decisions. The Knapsack Vote admits interesting strategic properties: in particular, it is strategy-proof under a natural utility model that depends on the overlap (and equivalently the $\ell_1$ distance) between the outcome and the voters' true preferred allocations . It can also be extended to more complicated settings with revenues, deficits and surpluses. However, Knapsack Voting is not very practical when there are a large number of projects. While \emph{Value-for-money} comparisons have some drawbacks, they provide a way of eliciting the voters' preferences with a small cognitive load, especially in the case of large ballots. The fact that our schemes do better on many different measures finds support in the data we collected from participatory budgeting elections in various cities/municipalities.  All our schemes are amenable to implementation using interactive digital tools, thereby enhancing the ability of voters to make more informed decisions in \emph{participatory budgeting}. We have been able to make some initial progress along implementing Knapsack Voting as the official ballot, and we hope that this paper makes a strong case for its wider adoption in practice.

\paragraph{\textbf{Acknowledgements}}
This work is supported by the Army Research office (grant \# 116388), the Office of Naval Research (grant \# 11904718), and the Stanford Cyber Initiative.

We also thank Tim Roughgarden for useful discussions.

\bibliographystyle{plainnat}

\begin{thebibliography}{42}
\providecommand{\natexlab}[1]{#1}
\providecommand{\url}[1]{\texttt{#1}}
\expandafter\ifx\csname urlstyle\endcsname\relax
  \providecommand{\doi}[1]{doi: #1}\else
  \providecommand{\doi}{doi: \begingroup \urlstyle{rm}\Url}\fi

\bibitem[Andr{\'e} et~al.(2010)Andr{\'e}, Cardenete, and
  Romero]{andre2010designing}
Francisco~J Andr{\'e}, M~Alejandro Cardenete, and Carlos Romero.
\newblock \emph{Designing public policies: An approach based on multi-criteria
  analysis and computable general equilibrium modeling}, volume 642.
\newblock Springer Science \& Business Media, 2010.

\bibitem[Arrow(2012)]{arrow2012social}
Kenneth~J Arrow.
\newblock \emph{Social choice and individual values}, volume~12.
\newblock Yale university press, 2012.

\bibitem[Barber{\`a} et~al.(1993)Barber{\`a}, Gul, and
  Stacchetti]{barbera1993generalized}
Salvador Barber{\`a}, Faruk Gul, and Ennio Stacchetti.
\newblock Generalized median voter schemes and committees.
\newblock \emph{Journal of Economic Theory}, 61\penalty0 (2):\penalty0
  262--289, 1993.

\bibitem[Barber{\`a} et~al.(1997)Barber{\`a}, Mass{\'o}, and
  Neme]{barbera1997voting}
Salvador Barber{\`a}, Jordi Mass{\'o}, and Alejandro Neme.
\newblock Voting under constraints.
\newblock \emph{Journal of Economic Theory}, 76\penalty0 (2):\penalty0
  298--321, 1997.

\bibitem[Benade et~al.(2017)Benade, Nath, Procaccia, and
  Shah]{benade2017preference}
Gerdus Benade, Swaprava Nath, Ariel~D Procaccia, and Nisarg Shah.
\newblock Preference elicitation for participatory budgeting.
\newblock In \emph{AAAI}, pages 376--382, 2017.

\bibitem[Bhagat and Brickley(1984)]{bhagat1984cumulative}
Sanjai Bhagat and James~A Brickley.
\newblock Cumulative voting: The value of minority shareholder voting rights.
\newblock \emph{The Journal of Law \& Economics}, 27:\penalty0 339, 1984.

\bibitem[Black(1948)]{black1948decisions}
Duncan Black.
\newblock The decisions of a committee using a special majority.
\newblock \emph{Econometrica: Journal of the Econometric Society}, pages
  245--261, 1948.

\bibitem[Border and Jordan(1983)]{border1983straightforward}
Kim~C Border and James~S Jordan.
\newblock Straightforward elections, unanimity and phantom voters.
\newblock \emph{The Review of Economic Studies}, 50\penalty0 (1):\penalty0
  153--170, 1983.

\bibitem[Brams(1990)]{brams1990constrained}
Steven~J Brams.
\newblock Constrained approval voting: a voting system to elect a governing
  board.
\newblock \emph{Interfaces}, 20\penalty0 (5):\penalty0 67--80, 1990.

\bibitem[Brams(1993)]{brams1993approval}
Steven~J Brams.
\newblock Approval voting and the good society.
\newblock \emph{PEGS Newsletter}, 3\penalty0 (1):\penalty0 10--14, 1993.

\bibitem[Brams and Fishburn(1978)]{brams2007approval}
Steven~J Brams and Peter~C Fishburn.
\newblock \emph{Approval voting}, volume~72.
\newblock Cambridge Univ Press, 1978.

\bibitem[Brams and Fishburn(2002)]{brams2002voting}
Steven~J Brams and Peter~C Fishburn.
\newblock Voting procedures.
\newblock \emph{Handbook of Social Choice and Welfare}, 1:\penalty0 173--236,
  2002.

\bibitem[Budish(2011)]{budish2011combinatorial}
Eric Budish.
\newblock The combinatorial assignment problem: Approximate competitive
  equilibrium from equal incomes.
\newblock \emph{Journal of Political Economy}, 119\penalty0 (6):\penalty0
  1061--1103, 2011.

\bibitem[Cabannes(2004)]{cabannes2004participatory}
Yves Cabannes.
\newblock Participatory budgeting: a significant contribution to participatory
  democracy.
\newblock \emph{Environment and Urbanization}, 16\penalty0 (1):\penalty0
  27--46, 2004.

\bibitem[Campbell and Kelly(2000)]{campbell2000weak}
Donald~E Campbell and Jerry~S Kelly.
\newblock Weak independence and veto power.
\newblock \emph{Economics Letters}, 66\penalty0 (2):\penalty0 183--189, 2000.

\bibitem[Caragiannis et~al.(2016)Caragiannis, Kurokawa, Moulin, Procaccia,
  Shah, and Wang]{caragiannis2016unreasonable}
Ioannis Caragiannis, David Kurokawa, Herv{\'e} Moulin, Ariel~D Procaccia,
  Nisarg Shah, and Junxing Wang.
\newblock The unreasonable fairness of maximum nash welfare.
\newblock In \emph{Proceedings of the 2016 ACM Conference on Economics and
  Computation}, pages 305--322. ACM, 2016.

\bibitem[Conitzer et~al.(2015)Conitzer, Brill, and
  Freeman]{conitzer2015crowdsourcing}
Vince Conitzer, Markus Brill, and Rupert Freeman.
\newblock Crowdsourcing societal tradeoffs.
\newblock In \emph{Proceedings of the 2015 International Conference on
  Autonomous Agents and Multiagent Systems}, pages 1213--1217. International
  Foundation for Autonomous Agents and Multiagent Systems, 2015.

\bibitem[Conitzer and Sandholm(2012)]{conitzer2012common}
Vincent Conitzer and Tuomas Sandholm.
\newblock Common voting rules as maximum likelihood estimators.
\newblock \emph{arXiv preprint arXiv:1207.1368}, 2012.

\bibitem[Dani(2017)]{danitruthful}
Varsha Dani.
\newblock Truthful and near-optimal mechanisms for welfare maximization in
  participatory budgeting.
\newblock \emph{Unpublished manuscript,
  http://www.tcs.tifr.res.in/~umang/papers/pb-main.pdf}, 2017.

\bibitem[Dias et~al.(2014)]{hopefordem}
Nelson Dias et~al.
\newblock Hope for democracy: 25 years of participatory budgeting worldwide.
\newblock \emph{Nelson Dias (org)}, 2014.

\bibitem[Duggan and Schwartz(2000)]{duggan2000strategic}
John Duggan and Thomas Schwartz.
\newblock Strategic manipulability without resoluteness or shared beliefs:
  Gibbard-satterthwaite generalized.
\newblock \emph{Social Choice and Welfare}, 17\penalty0 (1):\penalty0 85--93,
  2000.

\bibitem[Fain et~al.(2016)Fain, Goel, and Munagala]{fain2016core}
Brandon Fain, Ashish Goel, and Kamesh Munagala.
\newblock The core of the participatory budgeting problem.
\newblock In \emph{International Conference on Web and Internet Economics},
  pages 384--399. Springer, 2016.

\bibitem[Freeman et~al.(2019)Freeman, Pennock, Peters, and
  Vaughan]{freeman2019truthful}
Rupert Freeman, David~M Pennock, Dominik Peters, and Jennifer~Wortman Vaughan.
\newblock Truthful aggregation of budget proposals.
\newblock \emph{arXiv preprint arXiv:1905.00457}, 2019.

\bibitem[Ganuza and Baiocchi(2012)]{ganuza2012power}
Ernesto Ganuza and Gianpaolo Baiocchi.
\newblock The power of ambiguity: How participatory budgeting travels the
  globe.
\newblock \emph{Journal of Public Deliberation}, 8\penalty0 (2):\penalty0 8,
  2012.

\bibitem[Garg et~al.(2017)Garg, Kamble, Goel, Marn, and
  Munagala]{garg2017collaborative}
Nikhil Garg, Vijay Kamble, Ashish Goel, David Marn, and Kamesh Munagala.
\newblock Collaborative optimization for collective decision-making in
  continuous spaces.
\newblock In \emph{Proceedings of the 26th International Conference on World
  Wide Web}, pages 617--626. International World Wide Web Conferences Steering
  Committee, 2017.

\bibitem[Gibbard(1973)]{gibbard1973manipulation}
Allan Gibbard.
\newblock Manipulation of voting schemes: a general result.
\newblock \emph{Econometrica: Journal of the Econometric Society}, pages
  587--601, 1973.

\bibitem[Kim et~al.(2016)Kim, Jung, Ko, Han, Lee, Kim, and
  Kim]{kim2016budgetmap}
Nam~Wook Kim, Jonghyuk Jung, Eun-Young Ko, Songyi Han, Chang~Won Lee, Juho Kim,
  and Jihee Kim.
\newblock Budgetmap: Engaging taxpayers in the issue-driven classification of a
  government budget.
\newblock In \emph{Proceedings of the 19th ACM Conference on Computer-Supported
  Cooperative Work \& Social Computing}, pages 1028--1039. ACM, 2016.

\bibitem[Klamler et~al.(2012)Klamler, Pferschy, and
  Ruzika]{klamler2012committee}
Christian Klamler, Ulrich Pferschy, and Stefan Ruzika.
\newblock Committee selection under weight constraints.
\newblock \emph{Mathematical Social Sciences}, 64\penalty0 (1):\penalty0
  48--56, 2012.

\bibitem[Levin and Nalebuff(1995)]{levin1995introduction}
Jonathan Levin and Barry Nalebuff.
\newblock An introduction to vote-counting schemes.
\newblock \emph{The Journal of Economic Perspectives}, pages 3--26, 1995.

\bibitem[Lipton et~al.(2004)Lipton, Markakis, Mossel, and
  Saberi]{lipton2004approximately}
Richard~J Lipton, Evangelos Markakis, Elchanan Mossel, and Amin Saberi.
\newblock On approximately fair allocations of indivisible goods.
\newblock In \emph{Proceedings of the 5th ACM conference on Electronic
  commerce}, pages 125--131. ACM, 2004.

\bibitem[Lu and Boutilier(2011)]{lu2011budgeted}
Tyler Lu and Craig Boutilier.
\newblock Budgeted social choice: From consensus to personalized decision
  making.
\newblock In \emph{IJCAI}, volume~11, pages 280--286, 2011.

\bibitem[Moulin(1980)]{moulin1980strategy}
Herv{\'e} Moulin.
\newblock On strategy-proofness and single peakedness.
\newblock \emph{Public Choice}, 35\penalty0 (4):\penalty0 437--455, 1980.

\bibitem[Nehring and Puppe(2002)]{nehring2002strategy}
K~Nehring and Clemens Puppe.
\newblock Strategy-proof social choice on single-peaked domains: possibility,
  impossibility and the space between.
\newblock \emph{Unpublished manuscript, Department of Economics, University of
  California at Davis}, 2002.

\bibitem[Niemi(1984)]{niemi1984problem}
Richard~G Niemi.
\newblock The problem of strategic behavior under approval voting.
\newblock \emph{American Political Science Review}, 78\penalty0 (04):\penalty0
  952--958, 1984.

\bibitem[Pateman(2012)]{pateman2012participatory}
Carole Pateman.
\newblock Participatory democracy revisited.
\newblock \emph{Perspectives on Politics}, 10\penalty0 (01):\penalty0 7--19,
  2012.

\bibitem[PBP(2016)]{pbp}
PBP.
\newblock Where has it worked? - the participatory budgeting project.
\newblock
  \url{http://www.participatorybudgeting.org/about-participatory-budgeting/where-has-it-worked/},
  2016.

\bibitem[Procaccia et~al.(2012)Procaccia, Reddi, and
  Shah]{procaccia2012maximum}
Ariel~D Procaccia, Sashank~J Reddi, and Nisarg Shah.
\newblock A maximum likelihood approach for selecting sets of alternatives.
\newblock \emph{arXiv preprint arXiv:1210.4882}, 2012.

\bibitem[Satterthwaite(1975)]{satterthwaite1975strategy}
Mark~Allen Satterthwaite.
\newblock Strategy-proofness and arrow's conditions: Existence and
  correspondence theorems for voting procedures and social welfare functions.
\newblock \emph{Journal of Economic Theory}, 10\penalty0 (2):\penalty0
  187--217, 1975.

\bibitem[Schneider and Goldfrank(2002)]{schneider2002budgets}
Aaron Schneider and Ben Goldfrank.
\newblock Budgets and ballots in brazil: participatory budgeting from the city
  to the state.
\newblock 2002.

\bibitem[Smith(2009)]{smith2009democratic}
Graham Smith.
\newblock \emph{Democratic innovations: designing institutions for citizen
  participation}.
\newblock Cambridge University Press, 2009.

\bibitem[Young(1974)]{young1974axiomatization}
H~Peyton Young.
\newblock An axiomatization of borda's rule.
\newblock \emph{Journal of Economic Theory}, 9\penalty0 (1):\penalty0 43--52,
  1974.

\bibitem[Young(1988)]{young1988condorcet}
H~Peyton Young.
\newblock Condorcet's theory of voting.
\newblock \emph{American Political Science Review}, 82\penalty0 (04):\penalty0
  1231--1244, 1988.

\end{thebibliography}

\newpage
\appendix
\section*{APPENDIX}
\setcounter{section}{1}

\subsection{Extending results (approximately) to an Integral Model}\label{app:integral}
The integral version of Knapsack Voting is defined as follows:
\begin{definition} [Integral Knapsack Vote] \label{def:intknapsack}
\begin{enumerate}
\item Each voter $v \in V$ votes for a (integral) subset $S_v \in P$, such that it satisfies the budget constraint $\sum_{p \in Sv} c_p \leq B$.
\item The projects are arranged in decreasing order of number of their approval scores, which for any project $p$ is given by $\#\{v \in V : p \in S_v\}$.
\item The projects are chosen in this order till there is not enough budget to include the next project.
\end{enumerate}
\end{definition}
Our results, for example Theorem \ref{thm:EMDstratproof}, do not hold in the above integral model. The following example (similar ones can be given for Theorems \ref{thm:welfmax} and \ref{thm:parttrue}) shows us why:
\begin{example}
We have a budget of $5$ and three projects $a$,$b$ and $c$ of cost $2$,$2$ and $3$ respectively. Let's say that ties are broken in favor of projects with larger cost whenever possible. For a voter $v$, say that her favorite outcome is $b$ and $c$. Without $v$'s vote being tallied, assume $a$ and $b$ are winning --  with scores $15$, $11$ and $10$ for $a$,$b$ and $c$. Now if she votes for $b$ and $c$, the outcome is $a$ and $b$. But if she switches to $a$ and $c$, the outcome becomes $a$ and $c$ leading to higher utility (since she likes $c$ more than $b$).
\end{example} 

However, our results do hold in an \emph{approximately integral} model, i.e., with some assumptions.
\begin{definition}[Approximately Integral Knapsack Vote]
\begin{enumerate}
\item Each voter $v \in V$ votes for a (integral) subset $S_v \in P$, such that it satisfies the budget constraint $\sum_{p \in Sv} c_p \leq B$.
\item The projects are arranged in decreasing order of number of their approval scores, which for any project $p$ is given by $\#\{v \in V : p \in S_v\}$.
\item The projects are chosen in this order till there is not enough budget to include the next project. And the next one in line is fractionally implemented.
\end{enumerate}
\end{definition}

In other words, although the votes are constrained to be integral, \emph{approximately integral outcomes} are allowed -- the first project that gets the most votes and cannot be fully funded is allowed to be fractionally implemented, so that the entire budget is used up. When a project $j$ in $S_v$ (voter $v$'s favorite set) is funded fully, the utility derived is equal to $c_p$. If it is fractionally funded, then the utility is pro-rated appropriately.  It may or may not be realistic to partially fund projects (e.g., renovate only one floor of a library rather than the entire building), but partial funding and the requirement of spending all of the budget are both common in practice.

\paragraph{Note:} As mentioned in Section \ref{subsec:discutil}, the $\ell_1$ costs model doesn't make sense in a fully integral model (on account of violating free disposal). However, with approximately integral outcomes, both $\ell_1$ costs and Overlap utilities satisfy free disposal.

For Lemma \ref{lem:ell1overlap}, it is easy to see that we have the following analog in the approximately integral case.
\begin{corollary}\label{lem:ell1overlapfrac}
$|S_v \cap S^*| = \frac{B + |S_v|}{2} - \frac{1}{2} \sum_{p \in \p} |w^v_p - w^*_p|$.
\end{corollary}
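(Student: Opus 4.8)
The plan is to follow the proof of Lemma~\ref{lem:ell1overlap} essentially line for line, isolating the single place where the hypothesis $\sum_{p\in\p} w^v_p = B$ was used and replacing it by the identity $\sum_{p\in\p} w^v_p = |S_v|$, which is all that holds in the approximately integral model. First I would record that, exactly as in the budgeted case, $|S_v\cap S^*| = \sum_{p\in\p}\min\{w^v_p,w^*_p\}$: the per-dollar sub-projects in each project are consistent, so $S_v$ and $S^*$ restricted to project $p$ are nested initial segments and their intersection has size $\min\{w^v_p,w^*_p\}$ (with the usual pro-rating if the one fractionally funded project is involved).

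Next I would split $\p = \p_L \sqcup \p_H$ with $\p_L = \{p : w^v_p > w^*_p\}$ and $\p_H = \{p : w^v_p \le w^*_p\}$, so that $|S_v\cap S^*| = \sum_{p\in\p_L} w^*_p + \sum_{p\in\p_H} w^v_p$. I would then evaluate this sum in two complementary ways. Completing it to a sum over all of $\p$ through the outcome totals, and using that the approximately integral outcome still exhausts the budget, $\sum_{p\in\p} w^*_p = B$, gives $|S_v\cap S^*| = B - \sum_{p\in\p_H}(w^*_p - w^v_p)$. Completing it instead through the voter's totals, and now using only $\sum_{p\in\p} w^v_p = |S_v|$ (the vote need not spend the whole budget — this is the sole point of departure from Lemma~\ref{lem:ell1overlap}), gives $|S_v\cap S^*| = |S_v| - \sum_{p\in\p_L}(w^v_p - w^*_p)$. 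Averaging the two identities and noting that $\sum_{p\in\p_H}(w^*_p - w^v_p) + \sum_{p\in\p_L}(w^v_p - w^*_p) = \sum_{p\in\p}|w^v_p - w^*_p|$ yields $|S_v\cap S^*| = \frac{B+|S_v|}{2} - \frac{1}{2}\sum_{p\in\p}|w^v_p - w^*_p|$, as claimed.

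There is no real obstacle here: the computation is the same algebra as in Lemma~\ref{lem:ell1overlap} with one bookkeeping change, and when $|S_v| = B$ it collapses to that lemma. The only point worth spelling out is the justification that $\sum_{p\in\p} w^*_p = B$ persists in the approximately integral setting — which is precisely the design choice that the first project that cannot be fully funded is implemented fractionally so as to use up all of the budget — and that the overlap/utility contributed by such a fractionally funded favorite project is the pro-rated quantity $\min\{w^v_p,w^*_p\}$, so the opening identity for $|S_v\cap S^*|$ still holds verbatim.
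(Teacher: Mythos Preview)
Your proof is correct and is exactly the argument the paper has in mind: the paper does not spell out a proof but simply states that the corollary is the ``analog in the approximately integral case'' of Lemma~\ref{lem:ell1overlap}, and your write-up carries out precisely that analogous computation, replacing the one use of $\sum_{p\in\p} w^v_p = B$ by $\sum_{p\in\p} w^v_p = |S_v|$.
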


Using the above, and by a simple modification of the proofs of Theorems \ref{thm:EMDstratproof}, \ref{thm:welfmax} and \ref{thm:parttrue}, we can see that they extend to the approximately integral model.
\begin{observation}
The results of Theorems \ref{thm:EMDstratproof}, \ref{thm:welfmax} and \ref{thm:parttrue} hold under the approximately integral model.
\end{observation}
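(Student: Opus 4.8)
The plan is to pinpoint the single structural difference between the fractional model and the approximately integral model and check that nothing the original proofs rely on is broken. That difference is: a vote $S_v$ now carries total per-dollar mass $|S_v|=\sum_{p\in S_v}c_p$ which may be strictly below $B$, whereas in the fractional model $|S_v|=B$. What the three proofs actually use, however, survives: the winning allocation still has total mass exactly $B$, because the first project that cannot be fully funded is topped up fractionally. The one bookkeeping change is to invoke Corollary~\ref{lem:ell1overlapfrac}, $|S_v\cap S^*|=\tfrac{B+|S_v|}{2}-\tfrac12\sum_{p}|w^v_p-w^*_p|$, wherever the fractional argument invoked Lemma~\ref{lem:ell1overlap}; since $\tfrac{B+|S_v|}{2}$ is a constant for a voter with a fixed truthful preference, maximizing Overlap utility is still equivalent to minimizing $\ell_1$ cost, so it suffices to argue everything in the Overlap model.

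Theorem~\ref{thm:welfmax} then follows immediately: under truthful voting the identity $\sum_{j\in S^*}\score(j)=\sum_{v\in\V}|S_v\cap S^*|$ combines with Corollary~\ref{lem:ell1overlapfrac} to give $\tfrac12\bigl(|\V|B+\sum_{v\in\V}|S_v|\bigr)$ minus half the total $\ell_1$ cost, and the first term is independent of the outcome $S^*$, so the set maximizing $\sum_{j\in S^*}\score(j)$ — exactly what the approximately integral Knapsack rule returns — minimizes the total $\ell_1$ cost.

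For strategy-proofness I would use the following monotonicity fact, the approximately integral counterpart of the single-swap bookkeeping in the proof of Theorem~\ref{thm:EMDstratproof}: since the funded mass is always the top $B$ units of capacity in the score order, raising (resp.\ lowering) a project $p$'s score can only raise (resp.\ lower) $p$'s own funding and can only lower (resp.\ raise) every other project's funding, with the total held at $B$. Then I morph voter $v$'s arbitrary vote $T_v$ into her truthful vote $S_v$ by, at each step, adding a project of $S_v\setminus T_v$ or deleting a project of $T_v\setminus S_v$ (these intermediate sets serve only to chain utility comparisons, so their possibly exceeding the budget is harmless); the symmetric difference $|T_v\triangle S_v|$ strictly decreases, so we terminate at $S_v$. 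Adding $p\in S_v$ raises $v$'s Overlap contribution from $p$ by $p$'s funding gain $\delta\ge 0$ — that contribution equals $p$'s funding, since $w^v_p=c_p\ge w^*_p$ — while the remaining projects shed exactly $\delta$ units of funding in total and hence at most $\delta$ in Overlap, a net change $\ge 0$. Deleting $p\notin S_v$ leaves $v$'s contribution from $p$ at $0$ before and after (as $w^v_p=0$) and can only increase the others' funding and hence their Overlap, again a net change $\ge 0$. Chaining the steps shows voting $S_v$ weakly dominates $T_v$. The exchange argument behind Theorem~\ref{thm:parttrue} adapts identically, with the concave per-dollar utilities $\vp_{i,\cdot}$ in place of the $0/1$ Overlap contributions and with ``add a dominating currently-winning candidate / drop a dominated one'' as the admissible moves.

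The step that needs genuine care rather than transcription is the monotonicity fact itself: one must check that, in the presence of the fractionally funded marginal project and of ties at the budget line broken by the consistent rule, bumping one project's score moves funded mass only out of — never into — the projects below the budget line, and symmetrically for a downward bump. Granting that, the remaining modifications are purely the original proofs with $|S_v|$ no longer pinned to $B$.
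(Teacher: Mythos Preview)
Your proposal is correct, and in fact supplies considerably more detail than the paper does: the paper's own ``proof'' of this Observation is a single sentence asserting that Corollary~\ref{lem:ell1overlapfrac} replaces Lemma~\ref{lem:ell1overlap} and the original arguments then go through ``by a simple modification.'' Your treatment of Theorem~\ref{thm:welfmax} is exactly that substitution, and your handling of Theorem~\ref{thm:parttrue} is at the same level of hand-waving as the paper's.

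The one place you take a genuinely different route is strategy-proofness. The most direct reading of the paper's ``simple modification'' is to view an integral vote as the per-dollar set consisting of all dollars of the selected projects, choose a consistent tie-breaking order $\prec$ that groups dollars by project, and then rerun the per-dollar single-swap argument of Theorem~\ref{thm:EMDstratproof} verbatim (the intermediate per-dollar sets need not correspond to integral votes, but the outcome map is still well-defined on them). You instead work natively at the project level: you establish the monotonicity fact that bumping one project's score can only increase its own funding and decrease every other project's, and then morph $T_v$ into $S_v$ by whole-project additions and deletions, allowing over-budget intermediates purely as bookkeeping devices. Both routes are valid; yours is arguably cleaner for this model because it avoids having to reconcile the per-dollar tie-breaking order with the project-level one, at the cost of needing the monotonicity lemma you correctly flag as the one step requiring care. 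That lemma is indeed routine once stated: raising $p$'s score can only move $p$ weakly upward in the score ordering while leaving all other pairwise orderings intact, so the cost-prefix preceding any $q\neq p$ can only grow.
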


The above observation means that our results hold approximately in the strictly integral model under Overlap utilities (see note above). First, let's consider Theorem \ref{thm:EMDstratproof}. For example, let's say we have a voter whose favorite (integral) set of projects is $S_v$. Then under the approximately integral model, voting for $S_v$ is a weakly dominant strategy for voter $v$. This implies that the corresponding approximately integral outcome, composed of some integral set $S^*$ and a part of project $p^*$, yields at least as much utility as any other approximately integral outcome that $v$ can obtain by voting some other integral set $S_v^\prime$. Consequently, we have under the integral model that the set $S^*$ yields at least as much utility as any other integral set (that $v$ can obtain by changing her vote) minus the value of the project $p^*$. A similar observation can be made for Theorems \ref{thm:welfmax} and \ref{thm:parttrue}.

\begin{observation}
The results of Theorems \ref{thm:EMDstratproof}, \ref{thm:welfmax} and \ref{thm:parttrue} hold approximately, up to the value of one project, under the integral model.
\end{observation}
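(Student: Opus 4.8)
The plan is to reduce everything to the preceding Observation, which asserts that Theorems \ref{thm:EMDstratproof}, \ref{thm:welfmax} and \ref{thm:parttrue} hold \emph{exactly} in the approximately integral model, and then to control the gap between the approximately integral outcome and the strictly integral outcome (Definition \ref{def:intknapsack}) by the contribution of a single project. The structural fact I would establish first is that, for any fixed profile of integral votes, the two models produce the \emph{same} set of fully-funded projects: both rules scan the candidates in decreasing order of score, the strictly integral rule stopping just before the first project $p^*$ that does not fit in the remaining budget, while the approximately integral rule funds that same $p^*$ fractionally to exhaust the budget. Hence the integral outcome is exactly the fully-funded set $S^*$, and the approximately integral outcome is $S^*$ together with a fractional slice of the single boundary project $p^*$.

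With this in hand, the second step bounds the per-voter utility gap. Under Overlap utility the contribution of any project is pro-rated to the funded amount and lies in $[0,\min\{w^v_{p^*},c_{p^*}\}]\subseteq[0,c_{p^*}]$, so for every voter $v$ the approximately integral outcome and the integral set $S^*$ satisfy
\[
0 \;\le\; u_v(\mathrm{approx})-u_v(S^*)\;\le\; c_{p^*}\;\le\;\max_{p\in\p}c_p .
\]
In words, dropping the fractional boundary project costs each voter at most the value of one project; this is exactly where the Overlap/free-disposal framing (cf. the Note above) is needed, since it is what makes a fractional allocation contribute a correspondingly fractional, hence bounded, amount of utility.

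The third step chains this sandwich with the exact results. For Theorem \ref{thm:EMDstratproof}, let $S^*$ be the integral truthful outcome and $\hat S$ the integral outcome under an arbitrary deviation $S_v'$. The exact approximately integral strategy-proofness gives $u_v(\mathrm{approx\ truthful})\ge u_v(\mathrm{approx\ deviation})$, and combining with the two-sided bound above yields $u_v(S^*)\ge u_v(\mathrm{approx\ truthful})-c_{p^*}\ge u_v(\mathrm{approx\ deviation})-c_{p^*}\ge u_v(\hat S)-c_{p^*}$, so truthful voting is optimal up to one project's value. For Theorem \ref{thm:welfmax} I would sum the gap bound over all voters: since every integral outcome is a (degenerate) approximately integral outcome, the exactly welfare-maximizing approximately integral outcome dominates the best integral outcome, and subtracting the total welfare contributed by the single project $p^*$ shows the integral Knapsack outcome is welfare-optimal up to the social value of one project. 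For Theorem \ref{thm:parttrue}, the dominance-structured best response $S_i^\star$ guaranteed in the approximately integral model induces an integral outcome whose utility is within $c_{p^*}$ of the best achievable integral-outcome utility, so the same sincere, monotone vote remains a best response up to one project's value.

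The main obstacle I anticipate is the structural claim of the first step: pinning down that the two models differ in \emph{exactly one} project, the boundary project, so that the additive slack is a single bounded term rather than something that can accumulate across the scan. Once that is secured the utility comparisons are routine, but care is needed in the welfare case to read ``the value of one project'' as the total social welfare $\sum_{v\in\V}\min\{w^v_{p^*},c_{p^*}\}$ contributed by $p^*$ (a per-voter reading would be too strong), and in the strategy-proofness and partial strategy-proofness cases to note that although the boundary project differs between the compared truthful and deviation outcomes, each such term is uniformly at most $\max_{p\in\p}c_p$, so only one project's worth of slack is lost per comparison.
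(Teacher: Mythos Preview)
Your proposal is correct and follows essentially the same approach as the paper: reduce to the preceding Observation (exact results in the approximately integral model), note that the integral and approximately integral outcomes differ only in the single boundary project $p^*$, and then sandwich to obtain the one-project slack for each of the three theorems. If anything, your treatment is more careful than the paper's short sketch, particularly in distinguishing the per-voter versus social-welfare reading of ``value of one project'' and in noting that the boundary project may differ across compared profiles but is uniformly bounded by $\max_{p\in\p}c_p$.
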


We are drawing on ideas such as $\epsilon$-strategyproofness (strategy proof upto a utility value of $\epsilon$, and relaxations in combinatorial fair division such as envy-freeness upto one good \citep{caragiannis2016unreasonable,budish2011combinatorial,lipton2004approximately}.

\subsection{Proof of Theorem \ref{thm:budgbal}} \label{proof:budgbal}
Let $R_{-v}, S_{-v}$ and $\score_{-v}(.)$ be the outcome determined by the votes of everyone except $v$  using the Knapsack Voting rule (Equation \ref{eqn:knapsackbudgbal}). Assume that $Q_v \subseteq \rr$ and $T_v \subseteq \pp$ be a best response for voter $v$ such that $Q_v \neq R_v$ and $T_v \neq S_v$. Let the outcome after incorporating $Q_v,T_v$ be $\score(.)$ and $S$. 

We will discuss the case when $|Q_v| = |T_v| < |S_v| = |R_v|$. The other case of $|Q_v| = |T_v| \geq |S_v| = |R_v|$ follows analogously (equality is similar to the proof of Theorem \ref{thm:EMDstratproof}).

  Choose some $j \in S_v \setminus T_v$ such that if $k = D^p_t$ for some $p \in \p$, then $D^p_{t^\prime} \in T_v$ for all $t^\prime < t$. Such a $k$ exists because of consistency and the fact that $|T_v| < |S_v|$. Let $T_v^\prime \triangleq T_v \cup \{j\}$. Similarly, choose the $k \in R_v \setminus Q_v$ such that if $k = D^q_z$ for some $q \in \rr$, then $D^q_{z^\prime} \in T_v$ for all $z^\prime < z$ and define $Q_v^\prime \triangleq Q_v \cup \{k\}$. Let the outcome here be $\score^\prime(.)$ and $R^\prime, S^\prime$. We will show that $Q_v^\prime, T_v^\prime$ is also a best response for $v$. 
  
  If $R = R^\prime$ and $S=S^\prime$, then the utility is unchanged and we have nothing to prove. 

We have that $\score^\prime(j) = \score(j) + 1$, $\score^\prime(k) = \score(k) + 1$, and for all $l \in \pp \cup \rr \setminus \{j,k\}$, $\score^\prime(l) = \score(l)$. Note that the only change from $\score(.)$ to $\score^\prime(.)$ is that the score of $j$ and $k$ increases. So the outcomes must satisfy $R^\prime \supseteq R$ and $S^\prime \supseteq S$, and $|R^\prime| = |S^\prime| \leq |S| + 2 = |R| +2$. For any given tie-breaking rule, if $S^\prime \neq S$, then we must have either $j \in S^\prime \setminus S$ or $k \in R^\prime \setminus R$, or both, i.e., any change in outcome must involve either $j$ or $k$ moving from outside the winning set to within.

If $j \in S^\prime \setminus S$, then there is a corresponding sub-project that is added to $R$ to maintain the budget balance, say $m \in R^\prime \setminus R$.  The change in utility of voter $v$ is $\indicator(j \in S_v) - \indicator(m \notin R_v) = 1 - \indicator(m \notin R_v) \geq 0$. In other words, the change in utility from adding $j$ to $S$ is 0, and any $m$ to $R$ to maintain budget balance, cannot be negative.

Similarly, if $k \in R^\prime \setminus R$, then there is a corresponding $m^\prime \in S^\prime \setminus S$. And the change in utility is $-\indicator(k \notin R_v) + \indicator(m^\prime \in S_v) = 0 + \indicator(m^\prime \in S_v) \geq 0$.

By repeating this process, we reach a point at which we have a best response $Q_v,T_v$ equal in size to $R_v,S_v$ respectively. From here we do a procedure similar to the proof of Theorem \ref{thm:EMDstratproof} until we only have elements in $S_v$. In this entire process, we do not decrease the utility.

To prove that the outcome is welfare-maximizing, note that it is given by $(R^*,S^*)$, where $R^*$ and $S^*$ are both consistent and $|R^*| = |S^*|$, which maximizes the following:
\begin{align*}
 \sum_{i \in S} \score(i) + \sum_{j \in R} \score(j) =  \sum_{i \in S} \sum_{v \in \V} \indicator(i \in S_v) - \sum_{j \in S} \sum_{v \in \V} \indicator(j \notin R_v) \\
  =  \sum_{v \in \V} \sum_{i \in S} \indicator(i \in S_v) -  \sum_{v \in \V} \sum_{j \in R} \indicator(j \notin R_v)= \sum_{v \in V} |S_v \cap S| - |R \setminus R_v|,
\end{align*}
and the last quantity in the above is the welfare according to the Overlap Utility Model.
\subsection{Proof of Theorem \ref{thm:parttrue}}\label{proof:parttrue}
We will use the per-dollar approach (see definition in Section \ref{sec:knapsack}), i.e., each voter $i \in \V$ submits a vote $S_i \subseteq \pp$ such that $S_i$ is consistent and $|S_i| = B$.
Let us first reinterpret the notation from Section \ref{subsec:parttrue} per-dollar, and restate Theorem \ref{thm:parttrue} in technical terms.

Consider a focal voter $i$ responding to the votes of all others. Assume that she has full knowledge about the how the others voted in aggregate. If $S_{-i}$ denotes the cumulative votes of all voters except $i$, she knows  
$\W_{-i}$, the set of winners as determined by $S_{-i}$. Let
$\W(S_i,S_{-i})$ denote the set of winners if her vote $S_i$ is added. 

A \emph{best response} for $i$ is then defined as
a vote $S_i^\star$ that satisfies 
\begin{equation}
S_i^\star = \arg \max_{S_i \in A} \sum_{j \in \W(S_i,S_{-i})} \vp_{i,j},
\end{equation}
where $A \triangleq \{S: S \mbox{ is consistent},~|S| = B \}$, and $\vp_{i,j}$ is the utility of voter $i$ from sub-project $j \in \pp$.

With respect to voter $i$, we say a candidate $q \in \pp$ \emph{dominates} $p \in \pp$ if and only if 
\begin{itemize}
\item $q \in \W_{-i}$, and
\item $\vp_{i,q} > \vp_{i,p}$.
\end{itemize}
Let $ \Lambda_{i,p} \triangleq \{ j \in W_{-i}: \; \vp_{i,j} > \vp_{i,p} \} $ denote the collection of candidates that dominate $p$ with respect to voter $i$.

%

Let's say that $p\in S_i^\star$ and $\Lambda_{i,p} \nsubseteq S_i^\star$. We will first claim that $\Lambda_{i,p} \subseteq \W(S_i^\star,S_{-i})$. 

Let $j \in \Lambda_{i,p}$.
We have the following two possible cases:
\begin{enumerate}
\item If $j \in S_i^\star$: since $\Lambda_{i,p} \subseteq W_{-i}$, we have $j \in S_i^\star \cap
\W_{-i}$, and consequently $j \in \W(S_i^\star,S_{-i})$.
\item Else, if $j \notin S_i^\star$: Assume $j\notin \W(S_i^\star,S_{-i})$. Then $S_i^\star$ cannot be a best
response, because by switching her vote from $S_i^\star$ to $\left(S_i^\star \setminus
\{p\}\right) \cup \{j\}$, $i$ can make $j$ win instead of $p$, and this
strictly increases her total utility (since $v_{i,j}>v_{i,p}$). 
\end{enumerate}
Therefore, in either of the above-mentioned cases, we are guaranteed that $ j \in \W(S_i^\star,S_{-i})$. 

Now, if $S_1 \triangleq S_i^\star \setminus \Lambda_{i,p} $, then $|S_1| = B - |S_i^\star \cap \Lambda_{i,p}|$. And since we have proved $  \Lambda_{i,p} \subseteq \W(S_i^\star,S_{-i})$, it follows that $|S_1 \cap \W(S_i^\star,S_{-i})| \leq B - |\Lambda_{i,p}|$. These two facts together imply the following:
\begin{eqnarray*}
|S_1 \setminus \W(S_i^\star,S_{-i})| = & |S_1| - |S_1 \cap \W(S_i^\star,S_{-i})| \\
\geq & \left(B - |S_i^\star \cap \Lambda_{i,p} \right) - \left( B - |\Lambda_{i,p}| \right) \\
\geq & |\Lambda_{i,p}| - |S_i^\star \cap \Lambda_{i,p}| \\
= & | \Lambda_{i,p} \setminus S_i^\star|.
\end{eqnarray*}
Hence, there exists $S_2 \subseteq S_1 \setminus \W(S_i^\star,S_{-i})$ such that
$|S_2|=|\Lambda_{i,p}  \setminus S_i^\star|$.

Let $S_i^{\star\star} = (S_i^\star \setminus S_2) \cup \Lambda_{i,p} $.
Clearly, $|S_i^{\star\star}|=B$ (since $S_2
\cap \Lambda_{i,p} =
\emptyset $), and so it is a valid vote for voter $i$.

Also $\W(S_i^{\star \star},S_{-i}) = \W(S_i^\star,S_{-i})$, since, we have replaced $S_2 \subseteq \W(S_i^\star,S_{-i})^c$ with $\Lambda_{i,p} \setminus S_i^\star
\subseteq \W(S_i^\star,S_{-i})$. Because $S_i^\star$ is a best response, so
is $S_i^{\star\star}$.

Note that \begin{equation}\label{eqn:respdiff}
S_2 = S_i^{\star} \setminus S_i^{\star\star} \subseteq {\W(S_i^\star,S_{-i})}^c
\end{equation}

Now if there is a $\ppp \in S_i^{\star\star}$ such that $\Lambda_{i,\ppp} \nsubseteq S_i^{\star\star}$, we can do a similar replacement procedure to define another best response $S_i^{\star\star\star} = (\sss \setminus S_2^\prime) \cup \Lambda_{i,\ppp}$, such that $S_2^\prime \subseteq {\W(\sss,S_{-i})}^c$ (from equation \ref{eqn:respdiff}).  Since $\Lambda_{i,p} \subseteq \W(\sss,S_{-i})$, this implies that $S_2^\prime \cap \Lambda_{i,p} = \emptyset$ and so $S_i^{\star\star\star}$ includes both $\Lambda_{i,p}$ and $\Lambda_{i,\ppp}$.
By a series of replacements, we have the best response as required by the theorem.
\qed

\end{document}